\newcommand\trace[1]{\mathit{Tr}(#1)}
\newcommand\lang[1]{\mathit{L}(#1)}
\newcommand\abra[1]{\langle #1 \rangle}
\newcommand\cutout[1]{}
\newcommand\pmor{\rightharpoonup}
\newcommand\imax{\mathit{max}}
\newcommand\alp[1]{\mathcal{T}_{#1}}
\newcommand\sata{\mathsf{SATA}}
\newcommand\cn[1]{C^{(#1)}}
\newcommand\ees[1]{e^{(#1)}}
\newcommand\cs[1]{c^{(#1)}}
\newcommand\ds[1]{d^{(#1)}}
\newcommand\es[1]{e^{(#1)}}
\newcommand\CS[1]{C^{(#1)}}
\newcommand\DS[1]{D^{(#1)}}
\newcommand\ES[1]{E^{(#1)}}
\newcommand\N{\mathbb{N}}
\newcommand\sem[1]{{\llbracket}{#1}{\rrbracket}}
\newcommand\seq[2]{{#1} \vdash {#2}}
\newcommand\rarr\rightarrow
\newcommand\dom[1]{\mathsf{dom}(#1)}
\newcommand\makeset[1]{\{#1\}}
\newcommand\word[1]{\textrm{#1}}
\newcommand\trans[1]{{\xlongrightarrow{#1}}}
\newcommand\Aut{\mathcal{A}}
\newcommand\D{\mathcal{D}}
\newcommand\pred[1]{\mathit{pred}(#1)}
\newcommand\predc{\mathit{pred}}
\newcommand\predd[2]{\mathit{pred}^{#1}(#2)}
\newcommand\add{\mathrm{ADD}}
\newcommand{\del}{\mathrm{DEL}}
\newcommand\eps{\mathrm{EPS}}
\newcommand\mop[3]{#1,#2,#3}
\newcommand\fica{\mathsf{FICA}}
\newcommand\iatype[1]{{\bf #1}}
\newcommand\expt{\iatype{exp}}
\newcommand\comt{\iatype{com}}
\newcommand\vart{\iatype{var}}
\newcommand\semt{\iatype{sem}}
\newcommand\ctx{\mathcal{C}}
\newcommand\iaterm[1]{{\bf #1}}
\newcommand\fa{\mathbf{FA}}
\newcommand\oa{\mathbf{OA}}
\newcommand\pq{\mathbf{PQ}}
\newcommand\mul[1]{\mathfrak{M}(#1)}
\newcommand{\aasg}{\,\raisebox{0.065ex}{:}{=}\,}
\newcommand\deref[1]{!#1}
\newcommand\while[2]{\iaterm{while}\,#1\,\iaterm{do}\,#2}
\newcommand\cond[3]{\iaterm{if}\,#1\,\iaterm{then}\,#2\,\iaterm{else}\,#3}
\newcommand\skipcom{\iaterm{skip}}
\newcommand\divcom{\iaterm{div}}
\newcommand\newin[2]{\iaterm{newvar}\,#1\,\iaterm{in}\,#2}
\newcommand\newvar[2]{\iaterm{newvar}\,#1\,\iaterm{in}\,#2}
\newcommand\newsem[2]{\iaterm{newsem}\,#1\,\iaterm{in}\,#2}
\newcommand\grb[1]{\iaterm{grab}(#1)}
\newcommand\rls[1]{\iaterm{release}(#1)}
\newcommand\mkvar[2]{\iaterm{mkvar}(#1,#2)}
\newcommand\mksem[2]{\iaterm{mksem}(#1,#2)}
\newcommand\arop[1]{\mathbf{op}(#1)}
\newcommand\mem{\mathcal{V}}
\newcommand{\astep}[4]{\mem\vdash #2,\,#1 &\longrightarrow & #4,\,#3}
\newcommand{\step}[4]{\mem\vdash #2,\,#1\longrightarrow #4,\,#3}
\newcommand{\sqsubsim}{\,\raisebox{-.5ex}{$\stackrel{\textstyle\sqsubset}{\scriptstyle{\sim}}$}\,}
\newcommand\parc{||}
\newcommand\moveset{{\mathcal{M}}}
\newcommand\move[1]{\mathsf{#1}}
\newcommand\mread{\move{read}}
\newcommand\mwrite[1]{\move{write}(#1)}
\newcommand\mok{\move{ok}}
\newcommand\mrun{\move{run}}
\newcommand\mdone{\move{done}}
\newcommand\mrls{\move{rls}}
\newcommand\mgrb{\move{grb}}
\newcommand\mq{\move{q}}
\newcommand\mi{\move{i}}
\newcommand{\comp}[1]{\textsf{comp}(#1)}
\newcommand\justf[3][]{\nccurve[arrowsize=3pt,nodesep=.5pt,offsetB=-2pt,linewidth=0.8pt,angleA=110,angleB=30,linecolor=darkgray#1]{->}{#2}{#3}}
\newcommand\justh[3][]{\nccurve[arrowsize=3pt,nodesep=.5pt,offsetB=-2pt,linewidth=0.8pt,angleA=130,angleB=30,linecolor=darkgray#1]{->}{#2}{#3}}
\newcommand\justj[3][]{\nccurve[arrowsize=3pt,nodesep=.5pt,offsetB=-2pt,linewidth=0.8pt,angleA=160,angleB=30,linecolor=darkgray#1]{->}{#2}{#3}}
\newcommand\justn[4][]{\nccurve[arrowsize=3pt,nodesep=.5pt,offsetB=-2pt,linewidth=0.8pt,angleA=#4,angleB=30,linecolor=darkgray#1]{->}{#2}{#3}}
\begin{document}
\begin{frontmatter}
  \title{Saturating {A}utomata for {G}ame {S}emantics} 						
  \author{Alex Dixon\thanksref{a}}	
   \author{Andrzej S. Murawski\thanksref{b}\thanksref{c}}		
   \address[a]{Department of Computer Science\\ University of Warwick\\				
    Coventry, UK}  			
    \thanks[c]{This research was funded in whole or in part by EPSRC EP/T006579. For the purpose of Open Access, the author has applied a CC BY public copyright licence to any Author Accepted Manuscript (AAM) version arising from this submission.
   } 				
  \address[b]{Department of Computer Science\\University of Oxford\\
    Oxford, UK} 
        \begin{abstract}

Saturation is a fundamental game-semantic property satisfied by strategies that interpret higher-order concurrent programs.
It states that the strategy must be closed under certain rearrangements of moves, and corresponds to the intuition
that program moves (P-moves) may depend only on moves made by the environment (O-moves).

We propose an automata model over an infinite alphabet, called saturating automata, for which all accepted languages 
are guaranteed to satisfy a closure property mimicking saturation.

We show how to translate the finitary fragment of Idealized Concurrent Algol ($\fica$) into saturating automata,
confirming their suitability for modelling higher-order concurrency.
Moreover, we find that, for terms in normal form, 
the resultant automaton has linearly many 
transitions and states with respect to term size, 
and can be constructed in polynomial time.
This is in contrast to earlier
attempts at finding automata-theoretic models of $\fica$, 
which did not guarantee saturation 
and involved an exponential blow-up during translation,
even for normal forms.
\end{abstract}
\begin{keyword}
automata over infinite alphabets,
Finitary Idealized Concurrent Algol,
game semantics,
higher-order concurrency
\end{keyword}
\end{frontmatter}


\section{Introduction}

Game semantics is a versatile modelling theory that interprets computation 
as interaction between two players, called O (Opponent) and P (Proponent).
The two players represent the environment and the program respectively,
so programs can be interpreted as strategies for P.
Although initially game models concerned functional sequential computation, 
notably the language PCF~\cite{AJM00,HO00},
it did not take long for the methodology to be extended 
to other programming constructs such as state~\cite{AM97a,AHM98}, control operators~\cite{Lai97},
and, soon afterwards, concurrency.
Some of the game models were presented in the interleaving tradition of models of concurrency~\cite{Lai01,Lai06,GM08}, 
while others were built in the spirit of partial-order methods (true concurrency)~\cite{CCRW17}.

In the interleaving  approach, the aim is to construct strategies in such a way that
they will contain all possible sequential observations of parallel interactions.
Within game semantics, this led to the realisation that strategies must be closed 
under certain rearrangements of moves, to reflect the limited power of programs
to observe and control the actual ordering of concurrent actions.
Critically, a program can wait until an environment action occurs before proceeding,
but it does not have any influence over environment actions or its own concurrent actions beyond those stipulated by the game.
To express this constraint, one requires that strategies
should be closed under certain move swaps. More specifically,
consecutive $m_1 m_2$ can be swapped as long as the swap still leads to a valid play and
it is \emph{not} the case that $m_1$ is an O-move and $m_2$ is a P-move.

In game semantics, this condition first appeared in a model of Idealized CSP~\cite{Lai01},
and was named \emph{saturation} in~\cite{GM08}.
In game models based on event structures~\cite{CCRW17}, 
an analogous condition can be expressed more directly using event structures with polarity.
Variants of saturation also occur in other contexts in the theory of concurrency.
For example, they have been used to describe propagation of signals across wires in
delay-insensitive circuits~\cite{Udd86} or to
specify the relationship between input and output in asynchronous
systems with channels~\cite{JJH90}.  

More recently, there have been attempts at defining automata-theoretic formalisms that provide support
for representing plays in concurrent game semantics~\cite{DLMW21,DLMW21b}. At the technical level, plays are
sequences of moves connected by pointers, which poses a challenge for standard automata theory based on
finite alphabets. However, an infinite alphabet is ideal for this purpose,
especially if it has tree structure, so that the parent relation (link from child to parent) can provide 
a means of representing game-semantic pointers.
Although the proposed formalisms were shown to accommodate the game semantics of higher-order concurrent programs, 
notably, that of a finitary version of Idealized Concurrent Algol ($\fica$)~\cite{GM08}, they do not capture natively
the saturation condition: 
in addition to interpretations of $\fica$ terms (which are guaranteed to satisfy saturation),
they are also capable of accepting many other languages, which need not be closed under any kind of swaps.

In contrast, in this paper, we define an automata model over infinite alphabets,
called \emph{saturating automata}, for which any accepted language
is guaranteed to satisfy (a language variant of) the saturation condition.
It is achieved through carefully tailored transitions, which in particular
restrict the way that siblings may communicate with each other through parents,
and minimise direct communication between other generations.

The new design turns out to bring another technical advantage over existing translations.
Saturating automata corresponding to $\fica$ terms in normal form have linearly many states and transitions
(with respect to term size), and can be generated in at most quadratic time.
This is an improvement over the exponential complexity inherent in earlier translations,
which was due to either the fact that memory was modelled through control states~\cite{DLMW21}
or the use of product constructions to handle parallel composition~\cite{DLMW21b}.
In view of the ubiquity of the saturation condition, we believe that this makes
saturating automata into a point of interest in the design space of automata models,
which deserves further study in connection with game semantics or other areas mentioned above.

\subsection*{Related work} 

In addition to the papers already mentioned,
the combination of game semantics and automata theory over infinite alphabets
appeared in research into sequential computation, e.g.
to handle call-by-value computation with state~\cite{CHMO15,CBMO19}, ground references~\cite{MT18} and objects~\cite{MRT15}.
On the concurrent front, Petri-net-like devices have recently been proposed to interpret higher-order concurrency along with a correspondence to game semantics~\cite{CC23}. 

More broadly, our results are related to encodings of higher-order computation
in process calculi~\cite{San93,RS99,BHY01} (where the role of infinite alphabets would be played by a set of names)
and to abstract machines~\cite{LTY17}.
It would also be interesting to find connections between our work and trace theory over 
partially commutative alphabets~\cite{CF69,Maz78}, though there the commutation relation is typically symmetric, 
unlike in our case.


\section{Finitary Idealised Concurrent Algol ($\fica$)}
\label{sec:fica}

\begin{figure}[t]
\centering
  \AxiomC{$\phantom{\beta}$}
 \UnaryInfC{$\Gamma\vdash\skipcom:\comt $}
\DisplayProof\quad
  \AxiomC{$\phantom{\beta}$}
 \UnaryInfC{$\Gamma\vdash\divcom_\theta:\theta $}
\DisplayProof\quad
  \AxiomC{$0\le i\le\imax$}
 \UnaryInfC{$\Gamma\vdash i:\expt$}
\DisplayProof\quad
  \AxiomC{$\seq{\Gamma}{M:\expt}$}
  \UnaryInfC{$\seq{\Gamma}{\arop{M}:\expt}$}
  \DisplayProof\\[2ex]
  \AxiomC{$\Gamma\vdash M:\comt$}
  \AxiomC{$\Gamma\vdash N:\beta$}
  \BinaryInfC{$\Gamma \vdash M;N:\beta$}
  \DisplayProof\quad
    \AxiomC{$\Gamma\vdash M:\comt$}
  \AxiomC{$\Gamma\vdash N:\comt$}
  \BinaryInfC{$\Gamma \vdash M\parc N:\comt$}
  \DisplayProof\\[2ex]
      \AxiomC{$\Gamma\vdash M:\expt$}
  \AxiomC{$\Gamma\vdash N_1,N_2:\beta$}
  \BinaryInfC{$\Gamma\vdash \cond{M}{N_1}{N_2}:\beta$}
  \DisplayProof\qquad
    \AxiomC{$\Gamma\vdash M:\expt$}
  \AxiomC{$\Gamma\vdash N:\comt$}
  \BinaryInfC{$\Gamma\vdash \while{M}{N}:\comt$}
  \DisplayProof\\[2ex]
  \AxiomC{$\phantom{\beta}$}
  \UnaryInfC{$\Gamma, x:\theta \vdash x: \theta$}
  \DisplayProof\qquad
  \AxiomC{$\Gamma,x:\theta\vdash M:\theta'$}
  \UnaryInfC{$\Gamma\vdash\lambda x. M:\theta\rarr\theta' $}
  \DisplayProof\qquad
  \AxiomC{$\Gamma\vdash M:\theta\rarr\theta'$}
  \AxiomC{$\Gamma\vdash N:\theta$}
  \BinaryInfC{$\Gamma \vdash M N:\theta'$}
  \DisplayProof\\[2ex]
      \AxiomC{$\Gamma\vdash M:\vart$}
  \AxiomC{$\Gamma\vdash N:\expt$}
  \BinaryInfC{$\Gamma \vdash M\aasg N:\comt$}
  \DisplayProof\quad
  \AxiomC{$\Gamma\vdash M:\vart$}
  \UnaryInfC{$\Gamma \vdash {!}M:\expt$}
  \DisplayProof\quad
      \AxiomC{$\Gamma, x:\vart\vdash M:\comt,\expt$}
  \UnaryInfC{$\Gamma\vdash \newin{x}{M:\comt,\expt}$}
  \DisplayProof\\[2ex]
  \AxiomC{$\Gamma\vdash M:\semt$}
  \UnaryInfC{$\Gamma \vdash \rls{M}:\comt$}
  \DisplayProof\quad
  \AxiomC{$\Gamma\vdash M:\semt$}
  \UnaryInfC{$\Gamma \vdash \grb{M}:\comt$}
  \DisplayProof\quad
  \AxiomC{$\Gamma,s:\semt\vdash M:\comt,\expt$}
  \UnaryInfC{$\Gamma\vdash \newsem{s}{M:\comt,\expt}$}
  \DisplayProof

\caption{$\fica$ typing rules\label{fig:icatypes}}
\end{figure}

Idealised Concurrent Algol~\cite{GM08} is a paradigmatic 
call-by-name language 
combining higher-order computation with imperative constructs 
in the style of Reynolds~\cite{Rey78}, 
extended to concurrency with parallel composition ($\parc$) and binary semaphores.
We consider its finitary variant, $\fica$,
defined over a finite datatype $\makeset{0,\ldots,\imax}$ ($\imax\ge 0$), with no recursion, but with iteration.
Its types $\theta$ are generated by the grammar 
\[
\theta::=\beta\mid \theta\rarr\theta\qquad\qquad
  \beta::=\comt\mid\expt\mid\vart\mid\semt
\]
where 
$\comt$ is the type of commands;
$\expt$ that of $\makeset{0,\ldots,\imax}$-valued expressions;
$\vart$ that of assignable variables;
and $\semt$ that of semaphores.
The typing judgments are displayed in Figure~\ref{fig:icatypes}.
Here, $\skipcom$ and $\divcom_\theta$ are constants representing termination and divergence respectively,
$i$ ranges over $\{0,\ldots,\imax\}$,
and $\mathbf{op}$ represents unary arithmetic operations, such as successor or predecessor (since we work over a finite datatype, operations of bigger arity can be defined using conditionals).
Variables and semaphores can be declared locally via $\mathbf{newvar}$ and $\mathbf{newsem}$.
Variables are dereferenced using $!M$, and semaphores are manipulated using two (blocking) primitives,
$\grb{s}$ and $\rls{s}$, which  grab and release the semaphore respectively. 
We assume that variables are initialised to $0$
and semaphores are initially released.

In reduction rules, it will be convenient to use
the syntax
$\newvar{x\aasg i}{M}$ and $\newsem{x\aasg i}{M}$,
which allows us to specify initial values more flexibly,
i.e. $\newvar{x}{M}$ and $\newsem{x}{M}$ should be viewed as
$\newvar{x\aasg 0}{M}$ and $\newsem{x\aasg 0}{M}$ respectively.

{
The operational semantics is defined using a (small-step) transition
relation $\step{s}{M}{s'}{M'}$, where $\mem$ is a set of variable names
denoting active \emph{memory cells} and \emph{semaphore locks}.
$s,s'$ are states, i.e.\ functions $s,s':\mem\rightarrow\makeset{0,\cdots,\imax}$, and $M,M'$ are
terms.  We write $s\otimes (v\mapsto i)$ for the state obtained by augmenting $s$ with $(v\mapsto i)$, assuming $v\not\in \dom{s}$.
The basic reduction rules are given in Figure~\ref{fig:os},
where $c$ stands for any language constant ($i$ or $\skipcom$)
and $\widehat{\mathbf{op}}:\{0,\cdots,\imax\}\rarr\{0,\cdots,\imax\}$
is the function corresponding to $\mathbf{op}$.
In-context reduction is given by the schemata:

\begin{center}
\AxiomC{$\mem,v\vdash  M[v/x],s\otimes(v\mapsto i)\longrightarrow M',s'\otimes(v\mapsto i') $ \quad $M\neq c$}
\UnaryInfC{$\mem\vdash\newin{x\aasg i}{M},s\longrightarrow \newin{x\aasg i'}{M'[x/v]}, s' $}
\DisplayProof\\[2ex]
\AxiomC{$\mem,v\vdash  M[v/x],s\otimes(v\mapsto i)\longrightarrow M',s'\otimes(v\mapsto i') $\quad $M\neq c$}
\UnaryInfC{$\mem\vdash\newsem{x\aasg i}{M},s\longrightarrow \newsem{x\aasg i'}{M'[x/v]}, s' $}
\DisplayProof\\[2ex]
  \AxiomC{$\step{s}{M}{s'}{M'}$}
  \UnaryInfC{$\step{s}{\mathcal E[M]}{s'}{\mathcal E[M']}$}
  \DisplayProof
\end{center}
where reduction contexts $\mathcal E[-]$ are produced by the
grammar:
\[\begin{array}{rcl}
  \mathcal E[-] &::=& [-] \mid \mathcal E;N
  \mid (\mathcal E\,\parc\, N)
  \mid (M\,\parc\, \mathcal E)
  \mid {\mathcal E} N 
  \mid \arop{\mathcal E} 
  \mid \cond{\mathcal E}{N_1}{N_2}\\
  &&\mid {!}\mathcal E
  \mid \mathcal E\aasg m
  \mid M\aasg\mathcal E
  \mid \grb{\mathcal E}
  \mid \rls{\mathcal E}.
\end{array}\]
\begin{figure}[t]
\begin{center}
$\begin{array}{rclcrcl}
  \astep {s}{\skipcom\parc\skipcom}{s}{\skipcom} &\quad &   \astep {s}{\cond{i}{N_1}{N_2}}{s}{N_1},\quad i\neq 0\\
  \astep {s}{\skipcom;c}{s}{c} &&   \astep {s}{\cond{0}{N_1}{N_2}}{s}{N_2}\\
  \astep {s}{\arop{i}}{s}{\widehat{\mathbf{op}}(i)} &&    \astep{s}{(\lambda x.M) N}{s}{M[N/x]}\\
  \astep {s}{\newin{x\aasg i}c}{s}{c} &&    \astep {s\otimes(v\mapsto i)}{{!}v}{s\otimes(v\mapsto i)}{i}\\
  \astep {s}{\newsem{x\aasg i}c}{s}{c}  &&\astep {s\otimes(v\mapsto i)}{v\aasg i'}{s\otimes(v\mapsto i')}{\skipcom}
  \end{array}$
  \end{center}
  \begin{center}
  $\begin{array}{rcl}
  \astep {s\otimes(v\mapsto 0)}{\grb v}{s\otimes(v\mapsto 1)}{\skipcom}\\
  \astep {s\otimes(v\mapsto i)}{\rls v}{s\otimes(v\mapsto 0)}{\skipcom},\quad i\neq 0\\
   \astep {s}{\while{M}{N}}{s}{\cond{M}{(N;\while{M}{N})}{\skipcom}}
\end{array}$
\end{center}
\caption{Reduction rules for $\fica$}\label{fig:os}
\end{figure}
We say that a term $\seq{}{M:\comt}$ \emph{may terminate},  written $M\Downarrow$, if 
$\emptyset \vdash \emptyset,\,M\longrightarrow^\ast \emptyset,\skipcom$.

\cutout{
Idealized Concurrent Algol~\cite{GM08} also features variable and semaphore constructors, 
called $\textbf{mkvar}$ and $\textbf{mksem}$ respectively,
which play a technical role in the full abstraction argument, similarly to~\cite{AM97a}.
We omit them in the main body of the paper, because
they do not present technical challenges, but they are covered here for the sake of completeness.

\paragraph*{Typing rules}
\[
\AxiomC{$\Gamma\vdash M:\expt\rarr\comt$}
  \AxiomC{$\Gamma\vdash N:\expt$}
  \BinaryInfC{$\Gamma\vdash \mkvar{M}{N}:\vart$}
  \DisplayProof
\quad
 \AxiomC{$\Gamma\vdash M:\comt$}
  \AxiomC{$\Gamma\vdash N:\comt$}
  \BinaryInfC{$\Gamma\vdash  \mksem{M}{N}:\semt$}
  \DisplayProof
\]

\paragraph*{Reduction rules}

\begin{align*}
  \step {s&}{(\mkvar{M}{N})\aasg M'}{s}{M M'}\\
  \step {s&}{{!}(\mkvar{M}{N}}{s}{N}\\
  \step {s&}{\grb {\mathbf{mksem}\,M N}}{s}{M}\\
  \step {s&}{\rls {\mathbf{mksem}\,M N}}{s}{N}
\end{align*}

\paragraph*{$\eta$ rules for $\vart,\semt$}
\[\begin{array}{rcl}
M &\longrightarrow & \mkvar{(\lambda x^\expt. M\aasg x)}{!M}\\
M &\longrightarrow & \mksem{\grb{M}}{\rls{M}}
\end{array}\]
\cutout{
Using $\mathbf{mkvar}$ and $\mathbf{mksem}$,
one can define $\divcom_\theta$ as syntactic sugar using $\divcom=\divcom_\comt$ only.
\[
\divcom_\theta=\left\{
\begin{array}{lcl}
\divcom && \theta=\comt\\
\divcom;0 && \theta=\expt\\
\mkvar{\lambda x^\expt.\divcom}{\divcom_\expt} && \theta=\vart\\
\mksem{\divcom}{\divcom} & &\theta=\semt\\
\lambda x^{\theta_1}.\divcom_{\theta_2} & & \theta=\theta_1\rarr\theta_2\\
\end{array}\right.
\]}
}

$\fica$ terms can be compared using a notion of
\emph{contextual (may-)equivalence}, denoted $\Gamma\vdash M_1\cong M_2$.
Two terms of the same type and with the same free variables 
are equivalent if 
they cannot be distinguished with respect to termination by any context:
for all contexts $\ctx$ such that $\seq{}{\ctx[M_1]:\comt}$, we have
$\ctx[M_1]\!\Downarrow$ if and only if $\ctx[M_2]\!\Downarrow$.
Using game semantics, one can reduce $\cong$ to equality of
the associated sets of complete plays (Theorem~\ref{thm:full}).
\begin{example}\label{ex:term}
Consider the term
\[
\seq{f:\comt\rarr\comt, c:\comt}{
\newvar{x}{(f\,( x\,\aasg{1} )\,\,\parc\,\, \cond{\,\deref{x}}{\,c\,}{\,\divcom}_\comt);\, \deref{x}} : \expt}
\]
The free variable $f$ can be viewed as representing an unknown
function, to be bound to concrete code by a context. Since
we work in a call-by-name setting, that function may evaluate
its argument arbitrarily many times, including none.
If  the function does not use its argument, the value of $x$
will always be $0$ (we assume that local variables are initialised to $0$)
and the term will never terminate, because the right term inside $\parc$
will always diverge, preventing the whole term from terminating.
On the other hand, as long as $f$ evaluates its argument at least once and terminates,
and the right-hand side of $\parc$ is scheduled after the assignment $x\aasg 1$ (and code bound to $c$ terminates) then the whole term will terminate too, returning $1$.
\end{example}
In the next section we sketch the game semantics of $\fica$.


\section{Game semantics\label{sec:gs}}

In this section, we briefly present 
the fully abstract game model for $\fica$ from~\cite{GM08}, which we rely on in the paper.
Game semantics for $\fica$ involves
two players, called Opponent (O) and Proponent (P),
and the sequences of moves made by them can be viewed as interactions between 
a program (P) and a surrounding context (O).
The games are defined using an auxiliary concept of an arena.
\begin{definition}
An \emph{arena} $A$ is a tuple $\langle{M_A,\lambda_A,\vdash_A, I_A}\rangle$, where:
\begin{itemize}
\item $M_A$ is a set of \emph{moves};
\item $\lambda_A:M_A\rarr\makeset{O,P}\times\makeset{Q,A}$
is a function determining for each $m\in M_A$ whether
it is an \emph{Opponent} or a \emph{Proponent move}, 
and a \emph{question} or an \emph{answer};
we write $\lambda_A^{OP},\lambda_A^{QA}$ for the composite
of $\lambda_A$ with respectively the first and second projections;
\item $\vdash_A$ is a binary relation on $M_A$, called \emph{enabling},
satisfying: if $m\vdash_A n$ then $\lambda_A^{OP}(m)\neq\lambda_A^{OP}(n)$ and $\lambda_A^{QA}(m)=Q$;
\item $I_A\subseteq M_A$ is a set of \emph{initial moves} such that $\lambda_A(I_A)\subseteq\{(O,Q)\}$ and $\vdash_A \cap (M_A\times I_A) =\emptyset$ (no enablers).
\end{itemize}
\end{definition}
Note that an initial move must be an O-question (OQ).
In arenas used to interpret base types all questions are initial -
the possible P-answers (PA) are listed below ($0\le\mi\le \imax$).
\[\centering\renewcommand\arraystretch{0.9}\begin{array}{c|c|c}
~\word{Arena}~ & ~\word{OQ}~  & ~\word{PA}~ \\
\hline
\sem{\comt} & \mrun & \mdone \\\hline
\sem{\vart} & \mread & i \\
          & \mwrite{i} & \mok 
          \end{array}\qquad\qquad
          \begin{array}{c|c|c}
~\word{Arena}~ & ~\word{OQ}~  &~\word{PA}~\\
\hline
\sem{\expt} & \mq &   i\\
\hline
\sem{\semt} & \mgrb & \mok \\
          & \mrls & \mok
\end{array}
\]

\begin{figure}
{
\[\renewcommand\arraystretch{0.9}
\begin{array}{rclcrcl}
M_{A\times B}      &=& M_A+M_B &\qquad\qquad & M_{A\Rightarrow B}      &=& M_A+M_B\\[1mm]
\lambda_{A\times B}&=& [\lambda_A,\lambda_B] && \lambda_{A\Rightarrow B}&=& [\abra{\lambda_A^{PO},\lambda_A^{QA}},\lambda_B]
\qquad (\lambda_A^{PO}(m)= O \textrm{ iff } \lambda_A^{OP}(m)=P)\\[1mm]
\vdash_{A\times B} &=& \vdash_A+\vdash_B && \vdash_{A\Rightarrow B} &=& \vdash_A+\vdash_B+\makeset{\,(b,a)\mid b\in I_B\textrm{ and }a\in I_A}\\[1mm]
I_{A\times B} &=& I_A+I_B && I_{A\Rightarrow B} &=& I_B
\end{array}\]
}
\caption{Arena constructions ($+$ and $[\cdots]$ stand for the disjoint union of sets and functions respectively; $\abra{\cdots}$ denotes pairing).}\label{fig:gs1}
\end{figure}
More complicated types are interpreted inductively using
the \emph{product} ($A\times B$) 
and \emph{arrow} ($A\Rightarrow B$) constructions, given in Figure~\ref{fig:gs1}. 

\begin{figure}[t]
\begin{subfigure}{0.55\linewidth}
\[
\begin{array}{c}
A = \sem{\comt\rarr\comt}\times\sem{\comt} \Rightarrow\sem{\expt}\\[2ex]
\xymatrix@C=2mm@R=2mm{O &&&&\mq\ar@{-}[d]\ar@{-}[ld]\ar@{-}[lld]\\
P &&\mrun^f\ar@{-}[d]\ar@{-}[ld] &\mrun^c\ar@{-}[d] & i\\
O &\mrun^{f1}\ar@{-}[d]&\mdone^f &\mdone^c\\
P &\mdone^{f1}}
\end{array}
\]%
\caption{The arena $A$ for the term from Example~\ref{ex:term}.}\label{fig:gs2a}
\end{subfigure}%
\hfill%
\begin{subfigure}{0.55\linewidth}
\vspace{12.75mm} 
\begin{tikzpicture}[->,>=stealth',auto, node distance=1.5cm,
  thick,main node/.style={},player/.style={}, bend right=30]
   \node[main node] (s) {$s_1=$};
  \node[main node] (q) [right of=s] {$\mathsf{q}^{\vphantom{f}}$};
  \node[main node] (rf) [right of=q] {$\mrun^f$};
  \node[main node] (df) [right of=rf] {$\mdone^{f}$};

  \node[player] (pq) [below = 0mm of q] {$O$};
  \node[player] (prf) [below = 0mm of rf] {$P$};
  \node[player] (pdf) [below = 0mm of df] {$O$};

  \path (rf) edge[bend right] node [left] {} (q);
  \path (df) edge[bend right] node [left] {} (rf);
\end{tikzpicture}
\vspace{13mm}
\caption{$s_1$, a short justified sequence over $A$.}\label{fig:gs2b}
\end{subfigure}

\begin{subfigure}{\linewidth}
  \centering
\begin{tikzpicture}[->,>=stealth',auto, node distance=1.5cm,
  thick,main node/.style={},player/.style={}, bend right=30]
   \node[main node] (s) {$s_2=$};
  \node[main node] (q) [right of=s] {$\mq^{\vphantom{f}}$};
  \node[main node] (rf) [right of=q] {$\mrun^f$};
  \node[main node] (rf1) [right of=rf] {$\mrun^{f1}$};
  \node[main node] (df1) [right of=rf1] {$\mdone^{f1}$};
  \node[main node] (rc) [right of=df1] {$\mrun^{c}$};
  \node[main node] (dc) [right of=rc] {$\mdone^{c}$};
  \node[main node] (df) [right of=dc] {$\mdone^{f}$};
  \node[main node] (1) [right of=df] {$1$};

  \node[player] (pr) [below = 0mm of q] {$O$};
  \node[player] (prf) [below = 0mm of rf] {$P$};
  \node[player] (prf1) [below = 0mm of rf1] {$O$};
  \node[player] (pdf1) [below = 0mm of df1] {$P$};
  \node[player] (prc) [below = 0mm of rc] {$P$};
  \node[player] (pdc) [below = 0mm of dc] {$O$};
  \node[player] (pdf) [below = 0mm of df] {$O$};
  \node[player] (p1) [below = 0mm of 1] {$P$};

  \path (rf) edge[bend right] node [left] {} (q);
  \path (rf1) edge[bend right] node [left] {} (rf);
  \path (df1) edge[bend right] node [left] {} (rf1);
  \path (rc) edge[bend right] node [left] {} (q);
  \path (dc) edge[bend right] node [left] {} (rc);
  \path (df) edge[bend right] node [left] {} (rf);
  \path (1) edge[bend right] node [left] {} (q);
\end{tikzpicture}
\caption{$s_2$, a longer justified sequence over $A$.}\label{fig:gs2c}
\end{subfigure}
\caption{Arenas and justified sequences}
\end{figure}
We write $\sem{\theta}$ for the arena corresponding to type $\theta$. In Figure~\ref{fig:gs2a}, we give (the enabling relation of)
the arena
$A=(\sem{\comt\rarr\comt}\times\sem{\comt})\Rightarrow \sem{\expt}$,
which needs to be constructed to interpret the term from Example~\ref{ex:term}.
We use superscripts to distinguish copies of the same move
(the use of superscripts is consistent with our future convention,
which will be introduced in Definition~\ref{def:tags}).

Given an arena $A$, we specify next what it means to be a legal play in $A$.
For a start, the moves that players exchange will have to form
a \emph{justified sequence}, which is a finite sequence of moves
of $A$ equipped with pointers. Its first move is always initial and has no
pointer, but each subsequent move $n$ must have a unique pointer to an
earlier occurrence of a move $m$ such that $m\vdash_A n$.  We say that
$n$ is (explicitly) \emph{justified by} $m$ or, when $n$ is an answer, that
$n$ \emph{answers} $m$.
If a question does not have an answer in a justified sequence, we say
that it is \emph{pending} in that sequence.  
In Figures~\ref{fig:gs2b},~\ref{fig:gs2c} we give two justified sequences $s_1$ and $s_2$
over $A$.

Not all justified sequences are valid.  In order to constitute a legal
play, a justified sequence must satisfy a well-formedness condition
that reflects the ``static'' style of concurrency of our programming
language: any started sub-processes must end before the parent process terminates.
\cutout{any process starting sub-processes must wait for the
children to terminate in order to continue. In game terms: if a
question is answered then that question and all questions justified by
must have been answered (exactly once).} This is formalised as follows,
where the letters $q$ and $a$ to refer to question- and answer-moves
respectively, while $m$ denotes arbitrary moves.
\begin{definition}
The set $P_A$ of  \emph{plays over $A$} 
consists of the justified sequences $s$ over $A$ that satisfy
the two conditions below.
\begin{description}
\item[FORK]: In any prefix $s'= \cdots\rnode{A}{q} \cdots\rnode{B}{m}\justf{B}{A}$ of $s$, the question $q$ must be pending when $m$ is played.
\item[WAIT]: In any prefix $s'= \cdots\rnode{A}{q} \cdots\rnode{B}{a}\justf{B}{A}$ of $s$, all questions justified by $q$ must be answered.
\end{description}
\end{definition}
\cutout{For two shuffled sequences $s_1$ and $s_2$, $s_1\amalg s_2$ denotes
the set of all interleavings of $s_1$ and $s_2$.
For two sets of shuffled sequences $S_1$ and $S_2$,
$S_1\amalg S_2=\bigcup_{s_1\in S_1,s_2\in S_2}s_1\amalg s_2$.
Given a set $X$ of shuffled sequences, we define $X^0=X$, $X^{i+1}=X^i \amalg X$. 
Then $X^\circledast$, called \emph{iterated shuffle} of $X$, is defined to
be $\bigcup_{i\in\N}X^i$. }

It is easy to check that the justified sequences $s_1, s_2$ from~\cref{fig:gs2b,fig:gs2c} are plays.
\begin{remark}\label{rem:swaps}
It is worth noting that the notion of play is stable with respect to swaps of adjacent
moves except when the swaps involve occurrences of moves $m_1 m_2$ 
related by the pointer structure:
$\rnode{A}{m_1}\,\rnode{B}{m_2}\justf{B}{A}$
or $m_1, m_2$ are answers to questions $q_1, q_2$ such that
$q_2$ justifies $q_1$.
\end{remark}
A subset $\sigma$ of $P_A$ is \emph{O-complete} if $s\in \sigma$
and $s o\in P_A$ imply $so\in\sigma$, when $o$ is an O-move.
\begin{definition}
  A \emph{strategy} on $A$, written $\sigma:A$, is a
  prefix-closed O-complete subset of $P_A$.
\end{definition}
\cutout{
Recall that O represents the role of the environment/context in game semantics.
Thus, strategies record all potential environment actions.

The game model of $\fica$ consists of \emph{saturated} strategies only: the saturation
condition stipulates that all possible (sequential) observations of
(parallel) interactions must be present in a strategy: actions of the
environment (O) can always be observed earlier if possible, actions of the
program (P) can be observed later. To formalize this, for any arena
$A$, we define a preorder $\preceq$ on $P_A$, as the least transitive
relation $\preceq$ satisfying 
$s\, o\, m\, s'\preceq s\, m\, o\, s'$ and $s\, m\, p\, s'\preceq s\, p\, m\, s'$
for all $s,s'$,
where $o$ and $p$ are an O- and  a P-move respectively (in the above pairs of plays 
moves on the left-hand-side of $\preceq$ are assumed to have the same justifiers as on the right-hand-side). 
\begin{definition}\label{def:sat}
A strategy $\sigma:A$ is \emph{saturated} iff, for all $s,s'\in P_A$,
if $s\in \sigma$ and $s'\preceq s$ then $s'\in\sigma$.
\end{definition}
\begin{remark}\label{rem:causal}
Definition~\ref{def:sat} states that saturated strategies are stable 
under certain rearrangements of moves.
Note that $s_0\,  p\, o\, s_1\not \preceq s_0\, o\, p\, s_1$, while other move-permutations are allowed.
Thus, saturated strategies express causal dependencies of P-moves on O-moves. This partial-order aspect 
is captured explicitly in concurrent games based on event structures~\cite{CCRW17}.
\end{remark}
}

Suppose
$\Gamma=\{x_1:\theta_1,\cdots, x_l:\theta_l\}$ 
and $\seq{\Gamma}{M:\theta}$ is a $\fica$-term.
Let us write $\sem{\seq{\Gamma}{\theta}}$ for the arena $\sem{\theta_1}\times\cdots\times\sem{\theta_l}\Rightarrow\sem{\theta}$.
In~\cite{GM08} it is shown how to assign 
a strategy on $\sem{\seq{\Gamma}{\theta}}$ to any $\fica$-term
$\seq{\Gamma}{M:\theta}$. 
We write $\sem{\seq{\Gamma}{M}}$ to refer to that strategy.
For example, $\sem{\seq{\Gamma}{\divcom}}=\{\epsilon, \mrun\}$
and $\sem{\seq{\Gamma}{\skipcom}} = \{\epsilon,\mrun,\rnode{A}{\mrun}\, \rnode{B}{\mdone}\justf{B}{A}\}$.
The plays $s_1,s_2$ turn out to belong to the strategy that interprets the term from Example~\ref{ex:term}. 
\cutout{
$\seq{\Gamma}{M:\theta}$, where $\Gamma=\{x_1:\theta_1,\cdots, x_l:\theta_l\}$, using a strategy, written 
through strategies, written $\seq{\Gamma}{M:\theta}$, where $\Gamma=\{x_1:\theta_1,\cdots, x_l:\theta_l\}$,
are interpreted as saturated strategies (written $\sem{\seq{\Gamma}{M}}$) in the arena 
$\sem{\seq{\Gamma}{\theta}}=\sem{\theta_1}\times\cdots\times\sem{\theta_l}\Rightarrow\sem{\theta}$.
To model free identifiers $\seq{\Gamma,x:\theta}{x:\theta}$, one uses (the least saturated strategy generated by) 
alternating plays in which P simply copies moves between the two instances of $\sem{\theta}$.
Other elements of the syntax are interpreted using strategy composition with special strategies.
Below we give a selection of constructs along with the plays that generate the corresponding special strategies.

\noindent
\renewcommand\arraystretch{1}
\[\begin{array}{lclclcl}
;& & \rnode{A}{q}\,\,\rnode{B}{\mrun}^2\justh{B}{A}\,\,\rnode{C}{\mdone^2}\justh{C}{B}\, \,\rnode{D}{q^1}\justn{D}{A}{140}\, \,\rnode{E}{a^1}\justh{E}{D}\, \,\rnode{F}{a}\justh{F}{A} & & 
||& & \rnode{A}{\mrun}\,\,\rnode{B}{\mrun}^1\justh{B}{A}\,\,\rnode{C}{\mrun}^2\justh{C}{A}\, \,\rnode{D}{\mdone^1}\justn{D}{B}{140}\,\, \rnode{E}{\mdone^2}\justn{E}{C}{140}\, \,\rnode{F}{\mdone}\justn{F}{A}{155}\\[2ex]
\raisebox{0.065ex}{:}{=}& & \rnode{A}{\mrun}\,\,\rnode{B}{\mq^1}\justh{B}{A}\,\,\rnode{C}{i^1}\justh{C}{B}\, \,\rnode{D}{\mwrite{i}^2}\justn{D}{A}{140}\,\, \rnode{E}{\mok^2}\justh{E}{D}\, \,\rnode{F}{\mdone}\justn{F}{A}{150} & &
{!} & & \rnode{A}{\mq}\,\,\rnode{B}{\mread}^1\justn{B}{A}{120}\,\,\rnode{C}{i^1}\justf{C}{B}\,\, \rnode{D}{i}\justn{D}{A}{120}\\[2ex]
{\bf grab} && \rnode{A}{\mrun}\,\,\rnode{B}{\mgrb}^1\justn{B}{A}{110}\,\,\rnode{C}{\mok^1}\justf{C}{B}\, \,\rnode{D}{\mdone}\justn{D}{A}{135} & \qquad\qquad &
{\bf release} && \rnode{A}{\mrun}\,\,\rnode{B}{\mrls}^1\justn{B}{A}{110}\,\,\rnode{C}{\mok^1}\justn{C}{B}{120}\, \rnode{D}{\mdone}\justn{D}{A}{135}
  \end{array}\]
\medskip

\begin{tabular}{ll}
${\bf newvar}\,x\aasg i$ & $\quad \rnode{A}{q} \,\, \rnode{B}{q^1}\justj{B}{A}\,\, (\rnode{C}{\mread^{11}}\justn{C}{B}{160}\,\, \rnode{D}{i^{11}}\justf{D}{C})^\ast\, \,
\big(\sum_{j=0}^\imax(\rnode{E}{\mwrite{j}^{11}}\justj{E}{B}\,\, \rnode{F}{\mok^{11}}\justh{F}{E}\,\, (\rnode{G}{\mread^{11}}\justn{G}{B}{160} \,\, \rnode{H}{j^{11}}
\justh{H}{G})^\ast)\big)^\ast
\,\, a^1\, \,a$\\[2ex]
${\bf newsem}\, x\aasg 0$&
 $\quad \rnode{A}{q} \,\, \rnode{B}{q^1}\justf{B}{A}\,\, (
 \rnode{C}{\mgrb^{11}}\justn{C}{B}{160}\,\, \rnode{D}{\mok^{11}}\justf{D}{C}\,\, \rnode{E}{\mrls^{11}}\justn{E}{B}{155}\,\, 
 \rnode{F}{\mok^{11}}\justh{F}{E})^\ast\, \, (\rnode{G}{\mgrb^{11}}\justn{G}{B}{160}\,\,\rnode{H}{\mok^{11}}\justf{H}{G}+\epsilon)\,\, a^1\, \,a$ \\[1ex]
\end{tabular}\\[1.5ex]
}
Given a strategy $\sigma$,
we denote by $\comp\sigma$ the set of non-empty \emph{complete} plays of $\sigma$, i.e. those in which all questions have been
answered. For example, $s_1$ (\cref{fig:gs2b}) is not complete, but $s_2$ (\cref{fig:gs2c}) is.

The game-semantic interpretation $\sem{\cdots}$ can be viewed as a faithful record of all possible interactions between the
term and its contexts.
It provides a fully abstract model in the sense that contextual equivalence is characterized by the sets of non-empty complete plays.
\begin{theorem}[\cite{GM08}]\label{thm:full}
\cutout{$\Gamma\vdash M_1\sqsubsim  M_2$ iff
$\comp{\sem{\Gamma\vdash M_1}}\subseteq \comp{\sem{\Gamma\vdash M_2}}$.}
We have $\Gamma\vdash M_1\cong  M_2$ if and only if  $\comp{\sem{\Gamma\vdash M_1}}=\comp{\sem{\Gamma\vdash M_2}}$.
\end{theorem}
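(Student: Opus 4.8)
This is the standard full abstraction result for the game model of $\fica$ established in~\cite{GM08}, so the proof strategy is the classical one combining computational adequacy with a definability (finitary decomposition) argument. The plan is to prove the two directions of the biconditional separately, after establishing the key auxiliary facts. Throughout I would work with $\comp{\sem{-}}$, the set of non-empty complete plays.

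\textbf{Soundness (easy direction).} First I would establish \emph{compositionality}: the interpretation $\sem{-}$ is defined by induction on terms so that $\sem{\ctx[M]}$ is obtained from $\sem{M}$ by the same strategy-composition operations used to interpret the surrounding syntax; in particular $\sem{-}$ is invariant under the operational semantics, i.e.\ if $\emptyset\vdash\emptyset, M\longrightarrow\emptyset, M'$ then $\sem{\seq{}{M}}=\sem{\seq{}{M'}}$ (for closed command terms; more generally one tracks the store in the arena for $\vart$/$\semt$). Combined with an \emph{adequacy} statement --- $M\Downarrow$ iff $\rnode{A}{\mrun}\,\rnode{B}{\mdone}\justf{B}{A}\in\comp{\sem{\seq{}{M}}}$, for $\seq{}{M:\comt}$ --- this yields: if $\comp{\sem{\Gamma\vdash M_1}}=\comp{\sem{\Gamma\vdash M_2}}$ then for every context $\ctx$ with $\seq{}{\ctx[M_i]:\comt}$ we have $\comp{\sem{\ctx[M_1]}}=\comp{\sem{\ctx[M_2]}}$ by compositionality, hence $\ctx[M_1]\Downarrow\iff\ctx[M_2]\Downarrow$ by adequacy, i.e.\ $\Gamma\vdash M_1\cong M_2$. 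Adequacy itself is proved by the usual two inclusions: the ``$\Leftarrow$'' part (if the complete play is present then $M\Downarrow$) needs a logical-relations / computability argument over the operational semantics of Figure~\ref{fig:os}, while ``$\Rightarrow$'' follows from soundness of the reduction rules with respect to $\sem{-}$.

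\textbf{Completeness (hard direction).} For the converse I would prove the contrapositive: if $\comp{\sem{\Gamma\vdash M_1}}\neq\comp{\sem{\Gamma\vdash M_2}}$, say $s$ is a complete play in the first but not the second, then there is a separating context. The engine here is a \emph{definability} theorem: every finite ``bad'' prefix-closed O-complete set of plays --- in particular the finite approximant of $\sem{M_1}$ that contains $s$ --- is the denotation of some $\fica$ term (a \emph{finite canonical form}). This is where saturation is essential: because strategies are closed under the reorderings of Definition~\ref{def:sat}, a finite strategy is determined by a finite amount of causal information, which one reads off and writes as a term built from $\parc$, sequencing, conditionals, $\mathbf{newvar}$/$\mathbf{newsem}$, and $\divcom$ to cut off unwanted continuations. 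Given such a term realising a strategy that contains $s$ but whose complete plays are tailored to detect $s$, one then uses the full abstraction machinery in reverse: test tuples / the characteristic context turn the semantic separation $s$ into an operational separation, producing $\ctx$ with $\ctx[M_1]\Downarrow$ but $\ctx[M_2]\not\Downarrow$ (or vice versa), so $\Gamma\vdash M_1\ncong M_2$.

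\textbf{Main obstacle.} The genuinely hard step is the finitary definability argument: one must show that an arbitrary finite saturated (prefix-closed, O-complete) strategy over an arena of the form $\sem{\Gamma\vdash\theta}$ arises from a $\fica$ term. This requires a careful induction on the structure of the arena and on the size of the strategy, decomposing a strategy according to its initial moves and the threads spawned beneath them, handling the interaction of concurrency (interleavings, the iterated-shuffle structure) with higher-order copies of moves (the tagging/superscript discipline alluded to in Definition~\ref{def:tags}), and simulating state and semaphores via $\mathbf{newvar}$/$\mathbf{newsem}$ so that their game-semantic ``cell strategies'' are matched exactly. Since the statement is quoted from~\cite{GM08}, I would cite that development for the detailed decomposition and concentrate the exposition on how saturation makes the finite data extractable; everything else (compositionality, adequacy) is routine once the model is set up as above.
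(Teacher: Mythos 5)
The paper does not prove this theorem at all: it is imported verbatim from~\cite{GM08}, so there is no in-paper argument to compare against. Your sketch (compositionality plus adequacy for the easy direction, finitary definability of saturated strategies plus a separating context for the hard one) is the standard strategy followed in the cited source, so it is consistent with how the result is actually established; just note that the detailed definability/decomposition argument, including the role of $\mathbf{mkvar}$/$\mathbf{mksem}$-style factorisations, lives entirely in~\cite{GM08} and is beyond what this paper supplies.
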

The strategies corresponding to $\fica$ terms turn out to be
closed under swaps of adjacent moves as long as
the earlier move is a P-move or the later one is an O-move, and the swap produces a play.
Formally, for any arena $A$, let us define $\succeq\subseteq P_A\times P_A$ 
to be the least preorder satisfying
$s\, m\, o\, s' \succeq s\, o\, m\, s'$ and $s\, p\, m\, s' \succeq s\, m\, p\, s'$,
where $m,o,p$ range over moves, O-moves and P-moves respectively.
In the pairs of plays above, we assume that, during a swap, the justification pointers from the two moves also move with them.

\begin{example}
Consider the following play.
\begin{center}
\begin{tikzpicture}[->,>=stealth',auto, node distance=1.5cm,
  thick,main node/.style={},player/.style={}, bend right=25]
 \node[main node] (s) {$s_3=$};
  \node[main node] (q) [right of=s] {$\mq^{\vphantom{f}}$};
  \node[main node] (rf) [right of=q] {$\mrun^f$};
  \node[main node] (rf1) [right of=rf] {$\mrun^{f1}$};
      \node[main node] (rc) [right of=rf1] {$\mrun^{c}$};
  \node[main node] (dc) [right of=rc] {$\mdone^{c}$};
    \node[main node] (df1) [right of=dc] {$\mdone^{f1}$};
  \node[main node] (df) [right of=df1] {$\mdone^{f}$};
  \node[main node] (1) [right of=df] {$1$};

  \node[player] (pr) [below = -1mm of q] {$O$};
  \node[player] (prf) [below = -1mm of rf] {$P$};
  \node[player] (prf1) [below = -1mm of rf1] {$O$};
    \node[player] (prc) [below = -1mm of rc] {$P$};
  \node[player] (pdf1) [below = -1mm of df1] {$P$};
  \node[player] (pdf) [below = -1mm of df] {$O$};
    \node[player] (pdc) [below = -1mm of dc] {$O$};
  \node[player] (p1) [below = -1mm of 1] {$P$};

  \path (rf) edge[bend right] node [left] {} (q);
  \path (rf1) edge[bend right] node [left] {} (rf);
  \path (df1) edge[bend right] node [left] {} (rf1);
  \path (rc) edge[bend right] node [left] {} (q);
  \path (dc) edge[bend right] node [left] {} (rc);
  \path (df) edge[bend right] node [left] {} (rf);
  \path (1) edge[bend right] node [left] {} (q);
\end{tikzpicture}
\end{center}
\noindent
Observe that $s_2 \succeq s_3$, where $s_2$ is the play from~\cref{fig:gs2c},
because the P-move $\mdone^{f1}$ moved to the right
past a P-move ($\mrun^c$) and an O-move ($\mdone^c$).
In contrast, we do not have $s_3\succeq s_2$,
as this would involve moving a P-move ($\mdone^{f1}$) left
past an O-move ($\mdone^c$).
\end{example}
\begin{example}\label{ex:para}
Consider the plays $s_4, s_5$ given below (in the arena 
$\sem{\comt\rarr\comt\rarr\comt}$), which correspond to
parallel and sequential composition respectively.
Observe that $s_4 \succeq s_5$. Note that the witnessing swap
involves swapping $\mrun^2$ (P-move) with $\mdone^1$ (O-move),
which is permitted by the definition of $\succeq$.
\cutout{
\begin{tikzpicture}[->,>=stealth',auto, node distance=1.1cm,
  thick,main node/.style={},player/.style={}, bend right=20]
 \node[main node] (s) {$s_4=$};
  \node[main node] (m1) [right=-2mm of s] {$\mrun^{\vphantom{f}}$};
  \node[main node] (m2) [right of=m1] {$\mrun^1$};
  \node[main node] (m3) [right of=m2] {$\mrun^2$};
    \node[main node] (m4) [right of=m3] {$\mdone^1$};
  \node[main node] (m5) [right of=m4] {$\mdone^2$};
  \node[main node] (m6) [right of=m5] {$\mdone$};

  \node[player] (p1) [below = -1mm of m1] {$O$};
  \node[player] (p2) [below = -1mm of m2] {$P$};
  \node[player] (p3) [below = -1mm of m3] {$P$};
    \node[player] (p4) [below = -1mm of m4] {$O$};
  \node[player] (p5) [below = -1mm of m5] {$O$};
  \node[player] (p6) [below = -1mm of m6] {$P$};

  \path (m2) edge[bend right] node [left] {} (m1);
  \path (m3) edge[bend right] node [left] {} (m1);
  \path (m4) edge[bend right] node [left] {} (m2);
  \path (m5) edge[bend right] node [left] {} (m3);
  \path (m6) edge[bend right] node [left] {} (m1);

\end{tikzpicture}

\begin{tikzpicture}[->,>=stealth',auto, node distance=1.5cm,
  thick,main node/.style={},player/.style={}, bend right=15]
 \node[main node] (s) {$s_5=$};
  \node[main node] (m1) [right of=s] {$\mrun^{\vphantom{f}}$};
  \node[main node] (m2) [right of=m1] {$\mrun^1$};
      \node[main node] (m4) [right of=m2] {$\mdone^1$};
  \node[main node] (m3) [right of=m4] {$\mrun^2$};
  \node[main node] (m5) [right of=m3] {$\mdone^2$};
  \node[main node] (m6) [right of=m5] {$\mdone$};

  \node[player] (p1) [below = -1mm of m1] {$O$};
  \node[player] (p2) [below = -1mm of m2] {$P$};
      \node[player] (p4) [below = -1mm of m4] {$O$};
  \node[player] (p3) [below = -1mm of m3] {$P$};
  \node[player] (p5) [below = -1mm of m5] {$O$};
  \node[player] (p6) [below = -1mm of m6] {$P$};

  \path (m2) edge[bend right] node [left] {} (m1);
  \path (m3) edge[bend right] node [left] {} (m1);
  \path (m4) edge[bend right] node [left] {} (m2);
  \path (m5) edge[bend right] node [left] {} (m3);
  \path (m6) edge[bend right] node [left] {} (m1);

\end{tikzpicture}
}
\bigskip
\[\xymatrix@C=0.5mm@R=-.5mm{
s_4= &\rnode{A}{\mrun}&\rnode{B}{\mrun^1}&\rnode{C}{\mrun^2}&
\rnode{D}{\mdone^1} & \rnode{E}{\mdone^2} &\rnode{F}{\mdone}\\
&O&P&P&O&O&P
\justf{B}{A}\justf{C}{A}\justf{D}{B}\justf{E}{C}\justn{F}{A}{130}
}\qquad\xymatrix@C=0.5mm@R=-.5mm{
s_5= &
\rnode{A}{\mrun}&\rnode{B}{\mrun^1}&\rnode{C}{\mdone^1}&
\rnode{D}{\mrun^2} & \rnode{E}{\mdone^2} &\rnode{F}{\mdone}\\
&O&P&O&P&O&P
\justf{B}{A}\justf{C}{B}\justf{D}{A}\justf{E}{D}\justn{F}{A}{130}}
\]
\end{example}
\begin{definition}\label{def:sat}
A strategy $\sigma:A$ is \emph{saturated} if, for all $s,s'\in P_A$,
if $s\in \sigma$ and $s\succeq s'$ then $s'\in\sigma$.
\end{definition}
\begin{remark}\label{rem:causal}
Definition~\ref{def:sat} states that saturated strategies are stable under $\succeq$.
Note that $s\,  o\, p\, s'\not \succeq s\, p\, o\, s'$, while other $o/p$ combinations are allowed in $\succeq$. Thus, saturated strategies allow one to 
express causal dependencies of P-moves on O-moves. 
This aspect of strategies is captured explicitly 
in concurrent games based on event structures~\cite{CCRW17}.
\end{remark}
\begin{theorem}[\cite{GM08}]
For any $\fica$-term $\seq{\Gamma}{M}$, the strategy
$\sem{\seq{\Gamma}{M}}$ is saturated.
\end{theorem}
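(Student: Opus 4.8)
The plan is to proceed by induction on the typing derivation of $\seq{\Gamma}{M:\theta}$, exploiting the fact that the interpretation $\sem{\cdot}$ of~\cite{GM08} is built compositionally: for each typing rule in Figure~\ref{fig:icatypes}, the strategy $\sem{\seq{\Gamma}{M}}$ is obtained from the strategies interpreting the immediate subterms by the categorical operations of the model (composition, pairing, currying and uncurrying, together with the structural isomorphisms on arenas), while at the leaves one uses a fixed finite collection of \emph{generator} strategies --- the interpretations of the constants and primitives ($\skipcom$, $\divcom$, $i$, $\arop{-}$, $;$, $\parc$, $\cond{-}{-}{-}$, $\while{-}{-}$, $\assg{-}{-}$, $\deref{-}$, $\grb{-}$, $\rls{-}$, $\newvar{-}{-}$, $\newsem{-}{-}$) --- together with the copycat strategy for free identifiers. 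It therefore suffices to show (i) that every generator is saturated, and (ii) that saturation is preserved by each combining operation.

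For (i), I would observe that each generator is defined in~\cite{GM08} as the least saturated strategy containing an explicitly given regular set of ``canonical'' plays (and likewise the identifier strategy is the least saturated strategy containing the alternating copycat plays), so downward-closure under $\succeq$ holds by construction; the only thing left to check is that these least saturated sets really are strategies, i.e.\ prefix-closed and O-complete subsets of $P_A$, which follows from a direct inspection of their shape. For (ii), the structural isomorphisms merely relabel moves without changing the order in which they occur, so they commute with $\succeq$; and a swap inside a play of a pairing $\langle\sigma,\tau\rangle$ takes place entirely within one of the two components, so $\langle\sigma,\tau\rangle$ is saturated whenever $\sigma$ and $\tau$ are. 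The one substantive case is \emph{composition}: if $\sigma:A\Rightarrow B$ and $\tau:B\Rightarrow C$ are saturated, then so is $\sigma;\tau$.

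For closure of composition I would argue on interaction sequences. Suppose $t\in\sigma;\tau$ is obtained by hiding the $B$-moves of an interaction sequence $u$ over $A,B,C$ whose restrictions to $A\Rightarrow B$ and $B\Rightarrow C$ lie in $\sigma$ and $\tau$, and suppose $t\succeq t'$ via a single swap of adjacent moves $m_1\,m_2$ that is not of the forbidden ``$O$ then $P$'' shape. In $u$ the occurrences of $m_1$ and $m_2$ may be separated by a block of hidden $B$-moves; the key step is to show, using the saturation of $\sigma$ and of $\tau$, that this block can be shuffled aside so that in a $\succeq$-related interaction sequence $u'$ the two external moves are adjacent. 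Polarity bookkeeping is essential here: a $B$-move is internal, and its polarity in $\sigma$ is the \emph{opposite} of its polarity in $\tau$, so the re-orderings of internal moves that we must perform project on each of $\sigma$ and $\tau$ to swaps permitted by $\succeq$, while an external $O$- (resp.\ $P$-)move stays an $O$- (resp.\ $P$-)move in whichever component observes it, so swapping $m_1\,m_2$ in $u'$ is permitted on the relevant side(s). The swapped interaction sequence thus still restricts into $\sigma$ and $\tau$, and hiding $B$ from it yields $t'$; closing under transitivity gives $t'\in\sigma;\tau$.

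I expect the interaction-sequence analysis to be the main obstacle: making rigorous that the internal $B$-moves interposed between two external moves to be swapped can always be moved aside \emph{without ever requiring a forbidden swap on either component}, and carrying all of this out while tracking justification pointers (which travel with their moves under $\succeq$) and verifying that \textbf{FORK} and \textbf{WAIT} are preserved in $u'$ and in $t'$. Once closure of composition is in hand, the induction is routine: at each typing rule the interpretation is, up to currying/uncurrying and pairing, a composition of a saturated generator with saturated interpretations of the subterms, so saturation propagates up the derivation to $\seq{\Gamma}{M}$.
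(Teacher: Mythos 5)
The paper does not prove this statement at all --- it is imported verbatim from~\cite{GM08} --- and your plan is essentially the argument of that source: the interpretation is compositional, the generator strategies are by construction least saturated strategies containing given plays, and the substantive point is that composition preserves saturation, proved via an interaction-sequence/polarity-flip analysis exactly as you outline. So your approach matches the (cited) proof; the only caveat is that the composition lemma you correctly single out as the main obstacle is where all the real work in~\cite{GM08} lies.
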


In the next section we will introduce an automata-theoretic model
for representing plays. 
In contrast to earlier attempts, languages accepted by the automata
will satisfy a language-theoretic equivalent of the saturation condition.


\section{Saturating automata ($\sata$)}\label{sec:sata}

The automata to be introduced will accept 
the so-called data languages, i.e.
languages over an alphabet of the form $\Sigma\times\D$, where
$\Sigma$ is a finite alphabet
and $\D$ is a infinite alphabet of data values.
In our case, the dataset $\D$ will have the structure of a countably infinite forest.
This structure will be helpful when representing game semantics. In particular, it will be used to encode justification pointers and enforce the WAIT condition.
\begin{definition}
$\D$ is a countably infinite set equipped with a function $\predc:\D\rarr\D\cup\{\bot\}$ (the \emph{parent} function) such that the following conditions hold.
\begin{itemize}
\item Infinite branching: $\predc^{-1}(\{d_\bot\})$ is infinite for any $d_\bot\in\D\cup\{\bot\}$.
\item Well-foundedness: for any $d\in\D$, there exists $i\in\N$, called the \emph{level of $d$}, such that $\predc^{i+1}(d)=\bot$.
Level-$0$ data values are called \emph{roots}.
\end{itemize}
We say that $T\subseteq \D$ is a subtree of $\D$ if and only if $T$ is closed ($\forall x \in T \colon \pred{x}\in T\cup\{\bot\}$) and rooted ($\exists!x\in T\colon\pred{x}=\bot$).
\end{definition}
\begin{example}\label{ex:s2word}
Suppose $\Sigma$ consists of moves used in Figure~\ref{fig:gs2a},
$\pred{d_0}=\bot$, $\pred{d_1}=\pred{d_1'}=d_0$ and $\pred{d_2}=d_1$. The play $s_2$ (Figure~\ref{fig:gs2c}) can be represented by the following 
word over $\Sigma\times\D$: 
$(\mq,d_0)
(\mrun^f,d_1)
(\mrun^{f1}, d_2)
(\mdone^{f1},d_2)
(\mrun^c,d_1')
(\mdone^c,d_1')
(\mdone^f,d_1)
(1,d_0)$. Note that the predecessor relation is used to represent justification pointers. Full details of the representation scheme will be given in Section~\ref{sec:fica2sata}.
\end{example}
We use subtrees of $\D$ to represent configurations.
Their nodes will be annotated  with additional information.
We distinguish between odd and even levels to reflect the distinction between the behaviours of the environment (O) and the program (P).
\begin{itemize}
    \item Each even-level node will be annotated
    with a multiset of control states, and zero or more memory cells.
    This information will be allowed to evolve during runs.
    Intuitively, it represents the multiset of states of a group of processes.
    \item Nodes at odd levels will be labelled with single control states, which will not change.
\end{itemize}
In a single transition, the automaton will be able to add or remove leaves 
from its configuration using very limited information. 
When adding a leaf as a child of node $n$, only the state at $n$ will be available. When removing a leaf, in addition to the state at the leaf,
only the parent state will be accessed, if at all.
The automaton will also feature $\epsilon$-transitions, which do not modify the shape of the configuration, 
but can be used to update annotations at even levels, 
while possibly accessing memory cells
at ancestor nodes.

The automata will be parameterized by $k$ and $N$.
The parameter $k$ is the maximal depth of the data used by the automaton, 
while $N$ is the maximal number of memory cells at any node. 
A memory cell will store an element from $V=\{0,\ldots,\imax\}$.
The set of control states will be partitioned into sets
$\CS{i}$, for $0\le i\le k$, dedicated to representing run-time information
at the corresponding level $i$.
\begin{definition}
A \emph{saturating automaton} ($\sata$) is a tuple $\Aut=\abra{\Sigma,k,N,C,\delta}$, where:
\begin{itemize}
\item $\Sigma=\Sigma_{OQ}+\Sigma_{PQ} + \Sigma_{OA} +\Sigma_{PA}$ is a finite alphabet, partitioned into O/P-questions and O/P-answers (we use $q_O,q_P,a_O,a_P$ respectively to range over the elements of the four components);
\item $k\geq 0$ is the depth parameter and $N\ge 0$ is the local memory capacity;
\item $C = \Sigma_{i=0}^k \CS{i}$ is a finite set of \emph{control states}, partitioned into sets $\cn{i}$ of level-$i$ control states;
\item transitions in $\delta$ are partitioned according to their type ($\add$, $\del$ or $\eps$) and level on which they operate;
their shapes are listed below, where $\cs{i},\ds{i},\es{i}\in \CS{i}$
and $\DS{2i},\ES{2i}\in\mul{\CS{2i}}$, where $\mul{X}$ denotes the set of multisets over $X$.
    \begin{itemize}
    \item $\add(2i)$ transitions have the form $\cs{2i-1}\trans{q_O} \DS{2i}$
    or $\dagger\trans{q_O} \DS{0}$ for the special case of $i=0$;
    \item $\add(2i+1)$ transitions have the form $\cs{2i}\trans{q_P} \ds{2i+1}$;
    \item $\del(2i)$ transitions have the form $\DS{2i}\trans{a_P} \dagger$;
    \item $\del(2i+1)$ transitions have the form $\cs{2i+1}\trans{a_O}\ds{2i}$;
    \item $\eps(2i)$ transitions have the form $\DS{2i}\trans{\epsilon}\ES{2i}$;
    \item $\eps(2j,2i)$ transitions read $v \in V$ from memory cell $h \in \{1, \ldots, N\}$ at level $2j \leq 2i$ and update it to $v' \in V$, but do not read the input:
    $(\mop{2j}{h}{v},\cs{2i})\trans{\epsilon}(v',\ds{2i})$. 
    \end{itemize}
\end{itemize}
\end{definition}
\begin{remark}
Observe that O and P can ``act'' only at odd and even levels respectively.
The $\add(2i)$ transitions map exactly onto O-questions from the game semantics. We may view them as spawning a finite number of jobs (hence the use of multisets to represent those jobs' states). Dually, the $\del(2i)$ transition maps onto P-answers which answer those O-questions; correspondingly with WAIT, the $\del(2i)$ transition is only firable when all jobs have reached their ``terminal conditions''. Each job created via $\add(2i)$ can evolve
separately via $\add(2i+1)$ or $\del(2i+1)$, by $\eps(2j,2i)$ (internal state change plus memory operation), or as part of a group via $\eps(2i)$.
\end{remark}
\begin{definition}
A $\sata$ \emph{configuration} is a tuple $(D,E,f,m)$, 
where $D$ is a finite subset of $\D$ (consisting of data values that have been encountered so far),
$E$ is a finite subtree of $\D$ (the shape of the configuration), 
$f:E\rarr \sum\limits_{0<2i-1 \leq k} \CS{2i-1}+ \sum\limits_{0\leq 2i\leq k}\mul{\CS{2i}}$ is  such that
\begin{itemize}
\item if $d$ is a level-$2i$ data value then $f(d)\in \mul{\CS{2i}}$, 
\item if $d$ is a level-$(2i-1)$ data value then $f(d)\in \CS{2i-1}$, 
\end{itemize}
and $m:E\pmor V^N$ is a partial function whose domain is the set of  even-level nodes of $E$.
\end{definition}
A $\sata$ $\Aut$  starts from the empty configuration 
$\kappa_0=(\emptyset,\emptyset,\emptyset,\emptyset)$ 
and proceeds according to its transitions $\delta$, as detailed below.
We write $\kappa=(D,E,f,m)$ and $\kappa'=(D',E',f',m')$
for the current and the successor configurations respectively.

\paragraph*{\rm ADD} We shall have $\kappa\trans{(t,d)} \kappa'$ provided
$t\in\Sigma_{OQ}+\Sigma_{PQ}$, $d\not\in D$, $\pred{d}\in E$,
$D'=D\cup\{d\}$, $E'=E\cup\{d\}$,
and if the transition-specific constraints from the table below
are satisfied\footnote{Given a multiset $(X,\mu:X\rarr\N)$,
we write $x\in_m (X,\mu)$ to mean $\mu(x)>0$. Given
two multisets $(X,\mu_i)$ ($i=1,2$),
we write
$(X,\mu_1)\setminus_m (X,\mu_2)$,
 $(X,\mu_1)\cup_m (X,\mu_2)$ to stand for
$(X,\mu^-)$ and $(X,\mu^+)$ respectively, where
$\mu^-(x)=\max(\mu_1(x)-\mu_2(x),0)$ and
$\mu^+(x)=\mu_1(x)+\mu_2(x)$.
Similarly, $(X,\mu_1)\subseteq (X,\mu_2)$ denotes $\mu_1(x)\le \mu_2(x)$
for all $x\in X$.
}. We write $f[\cdots]$ to extend or update $f$.
\[\begin{array}{c|c|c|c|c}
t&\textrm{transition} & \textrm{pre-condition}  & f' & m'\\
\hline
q_O&\dagger\trans{q_O} \DS{0} & D=\emptyset &\{d\mapsto \DS{0}\} & \{d\mapsto 0^N\}\\
q_O&\cs{2i-1}\trans{q_O}\DS{2i} & f(\pred{d})=\cs{2i-1} &   f [d\mapsto\DS{2i}] & m[d\mapsto 0^N]\\
q_P&\cs{2i}\trans{q_P}\ds{2i+1} & \cs{2i}\in_m f(\pred{d}) &
f \left[\begin{array}{c}\pred{d}\mapsto f(\pred{d})\setminus_m \{\cs{2i}\}\\d\mapsto \ds{2i+1}\end{array}\right] & m\\
\end{array}\]
Note that, in the first two cases, 
memory is initialised at the new node.
In the last case, $\cs{2i}$ is removed from $f(\pred{d})$,
i.e. if a job starts evolving via $\add(2i+1)$, 
it is removed from the list of current jobs.

\paragraph{\rm DEL}
We shall have $\kappa\trans{(t,d)} \kappa'$
provided $t\in\Sigma_{OA}+\Sigma_{PA}$,
$d$ is a leaf in $E$, $D'=D$, $E'=E\setminus\{d\}$, $m'=m$,
and the transition-specific constraints listed below are satisfied.
\[\begin{array}{c|c|c|c}
t&\textrm{transition} & \textrm{pre-condition}  & f' \\
\hline
a_O & \cs{2i+1}\trans{a_O} \ds{2i} & f(d)=\cs{2i+1} &
f [\pred{d}\mapsto f(\pred{d})\cup_m\{\ds{2i}\}]\\
a_P & \DS{2i}\trans{a_P}\dagger & f(d)=\DS{2i} & f\\
\end{array}\]
Note that, in the first case, the leaf will contribute a new state
to the parent node. For simplicity, we do not ``garbage-collect'' $f'$,
since the leaf removal is already recorded via $E'$.

\paragraph{\rm EPS}
We shall have $\kappa\trans{\varepsilon}\kappa'$
provided $D'=D$, $E'=E$ and there exists an even-level datum $d$
satisfying the transition-specific constraints discussed below.
\begin{itemize}
\item For $\DS{2i}\trans{\epsilon}\ES{2i}$,
we require $\DS{2i}\subseteq_m f(d)$,
$f'=f[d\mapsto (f(d)\setminus_m \DS{2i})\cup_m\ES{2i}]$ and $m'=m$.

\item For $(\mop{2j}{h}{v},\cs{2i})\trans{\e}(v',\ds{2i})$,
we require $\cs{2i}\in_m f(d)$ and $m(\mathit{pred}^{2i-2j}(d))(h)=v$,
$f'=f[d\mapsto (f(d)\setminus_m \{\cs{2i}\})\cup_m\{\ds{2i}\}]$
and $m'=m [\mathit{pred}^{2i-2j}(d)(h) \mapsto v' ]$.
\end{itemize}
Note that, in the second case, $m(\mathit{pred}^{2i-2j}(d))(h)$ refers
to the $h$th memory cell of $d$'s ancestor at level $2j$ and only the
content of this cell may be modified by the transition.
\begin{definition}
A \emph{trace} of a $\sata$ $\Aut$ is a word $w \in (\Sigma\times\D)^\ast$ 
such that $\kappa_0\trans{l_1}\kappa_1\dots\kappa_{h-1}\trans{l_{h}}\kappa_{h}$, 
where $\kappa_0=(\emptyset,\emptyset,\emptyset,\emptyset)$,
$l_i\in \{\epsilon\}\cup(\Sigma\times\D)$ ($1\le i\le h$) and
$w=l_1\cdots l_h$. A configuration $\kappa=(D,E,f,m)$  
is \emph{accepting} if $E$ is empty.
A trace $w$ is accepted by $\Aut$ if there is a non-empty sequence of transitions as above with $\kappa_h$  accepting.  
The set of traces (resp. accepted traces) of $\Aut$ is denoted
by $\trace{\Aut}$ (resp. $\lang{\Aut}$).
\end{definition}
It follows that each data value can occur in a trace at most twice. The first occurrence (if any)
must be related to a question, whereas the second one will necessarily be an answer.
The fact that answers can be read only if the corresponding node becomes a leaf 
is analogous to the game-semantic WAIT condition.
Note that $E$ is empty in accepting configurations.
This means that in every word that is accepted,
each question $q_O/q_P$ (corresponding to leaf creation) 
will have a corresponding answer $a_P/a_O$ (corresponding to leaf removal),
and they will be paired up with the same data value. 
Such words resemble complete plays (Theorem~\ref{thm:full})
under the convention that a justification pointer from an answer to a question is represented by using the data value introduced by the question.
Indeed, we will rely on this  when representing plays 
in Section~\ref{sec:fica2sata}.
\begin{example}\label{ex:aut}
{
\newcommand{\ex}[1]{\ensuremath{#1_{\mathsf{ex}}}}
\newcommand{\sset}[1]{\{#1\}}
The $\sata$ $\Aut=\abra{\Sigma, 2, 1, C, \delta}$ specified below recognises complete plays generated by the $\fica$ term from Example~\ref{ex:term} according to the interpretation from \cite{GM08}. 
It is trace- and language-equivalent to the one that would be derived by the translation given in the proof of Theorem~\ref{thm:fica2sata}, though the representation here is made more concise. We use $l$ and $r$ to denote control states corresponding to the left and right subterms of the parallel composition in Example~\ref{ex:term}. The memory value maintained at level 0 corresponds to the value of the variable $x$.

We have
$\Sigma_{OQ} = \{ \mq, \mrun^{f1} \}$, $\Sigma_{PQ}=\{\mrun^f, \mrun^c\}$,
$\Sigma_{OA}=\{ \mdone^f,\mdone^c \}$, $\Sigma_{PA}=\{\mdone^{f1}, 0,\cdots,\imax\}$,
$\CS{0}=\{l_1^{(0)}, l_2^{(0)}, r_1^{(0)}, r_2^{(0)}, r_3^{(0)}, r_4^{(0)} \}$,
$\CS{1} = \{ l_1^{(1)}, r_1^{(1)}\}$, and
$\CS{2} = \{ l_1^{(2)}, l_2^{(2)} \}$.
$\delta$ is given below.

\begin{tabular}{ll}
\textnormal{ ADD(0), DEL(0):}&
$\dagger \trans{\mq} \sset{ l_1^{(0)}, r_1^{(0)} } 
\qquad 
\sset{ l_2^{(0)}, r_4^{(0)} } \trans{1} \dagger
$
\\
\textnormal{ ADD(1), DEL(1):}&
$
l_1^{(0)} \trans{\mrun^{f}} l_1^{(1)}
\qquad
l_1^{(1)} \trans{\mdone^{f}} l_2^{(0)}
\qquad
r_3^{(0)} \trans{\mrun^{c}} r_1^{(1)}
\qquad
r_1^{(1)} \trans{\mdone^{c}} r_4^{(0)}
$
\\
\textnormal{ ADD(2), DEL(2):}&
$
l_1^{(1)} \trans{\mrun^{f1}} \sset{l_1^{(2)}}
\qquad
\sset{l_2^{(2)}} \trans{\mdone^{f1}} \dagger
$
\\
\textnormal{ EPS(0,0):}&
$(0,1,0,r_1^{(0)}) \trans{\varepsilon} (0,r_2^{(0)})\qquad
(0,1,i,r_1^{(0)}) \trans{\varepsilon} (i,r_3^{(0)}) \quad (0 < i\le \imax)$\\
\textnormal{ EPS(0,2):}&
$(0,1,i,l_1^{(2)}) \trans{\varepsilon} (1,l_2^{(2)})\quad (0\le i\le\imax)$
\end{tabular}
}





{
\newcommand{\ex}[1]{\ensuremath{#1_{\mathsf{ex}}}}
\newcommand{\sset}[1]{\{#1\}}

We give a possible transition sequence for $\Aut$.
For the sake of simplicity, data values from $\mathcal{D}$ will be subscripted with a number corresponding to their level, and superscripted with zero or more primes to distinguish within each level. 
Configurations are denoted as a tree of nodes, reflecting the subtree of $\mathcal{D}$ currently maintained in the automaton. 

Nodes at even levels $2i$ are written $d(X)$ or $d(X, v)$, where $d$ is a level-$2i$ data value, $X \in \mul{C^{(2i)}}$ and $v$ represents the memory value maintained at that node
(in this case always a single number).
Nodes at odd levels $2i-1$ have the form $d(X)$, where $d$ is a level-$(2i-1)$ data value and $X \in C^{(2i-1)}$.
The complete transition sequence is given in Figure~\ref{fig:trans-seq-s2}.
It witnesses the acceptance of a data word corresponding to the play $s_2$ from  Figure~\ref{fig:gs2c}.

{
\tikzset{
 auto,
 node distance=1.25cm,
 anchor=base,baseline= (current bounding box.center)
}
\begin{figure}[t]
    \centering
    \[
    \begin{tikzpicture}
    \node (d0) {$\dagger$};
    \end{tikzpicture}
    ~
    \trans{(\mq, d_0)}
    ~
    \begin{tikzpicture}
    \node (d0) {$d_0\,(\sset{ l_1^{(0)}, r_1^{(0)} }, 0)$};
    \end{tikzpicture}
    ~
    \trans{(\mrun^f, d_1)}
    ~
    \begin{tikzpicture}
    \node (d0) {$d_0\,(\sset{ r_1^{(0)} }, 0)$};
    \node (d1) [below of=d0] {$d_1\,({ l_1^{(1)} })$};
    \path (d0) edge (d1);
    \end{tikzpicture}
    ~
    \trans{(\mrun^{f1}, d_2)}
    ~
    \begin{tikzpicture}
    \node (d0) {$d_0\,(\sset{ r_1^{(0)} }, 0)$};
    \node (d1) [below of=d0] {$d_1\,({ l_1^{(1)} })$};
    \node (d2) [below of=d1] {$d_2\,(\sset{ l_1^{(2)} })$};
    \path (d0) edge (d1);
    \path (d1) edge (d2);
    \end{tikzpicture}
    \]
    \[
    \trans{\epsilon}
    ~
    \begin{tikzpicture}
    \node (d0) {$d_0\,(\sset{ r_1^{(0)} }, 1)$};
    \node (d1) [below of=d0] {$d_1\,({ l_1^{(1)} })$};
    \node (d2) [below of=d1] {$d_2\,(\sset{ l_2^{(2)} })$};
    \path (d0) edge (d1);
    \path (d1) edge (d2);
    \end{tikzpicture}
    ~
    \trans{(\mdone^{f1}, d_2)}
    ~
    \begin{tikzpicture}
    \node (d0) {$d_0\,(\sset{ r_1^{(0)} }, 1)$};
    \node (d1) [below of=d0] {$d_1\,({ l_1^{(1)} })$};
    \path (d0) edge (d1);
    \end{tikzpicture}
    ~
    \trans{\epsilon}
    ~
    \begin{tikzpicture}
    \node (d0) {$d_0\,(\sset{ r_3^{(0)} }, 1)$};
    \node (d1) [below of=d0] {$d_1\,({ l_1^{(1)} })$};
    \path (d0) edge (d1);
    \end{tikzpicture}
    ~
    ~
    \trans{(\mrun^{c}, d_1^\prime)}
    ~
    \begin{tikzpicture}
    \node (d0) {$d_0\,(\emptyset, 1)$};
    \node (d1) at (-1,-1.3) {$d_1\,({ l_1^{(1)} })$};
    \node (d11) at (1,-1.3) {$d^\prime_1\,({ r_1^{(1)} })$};
    \path (d0) edge (d1);
    \path (d0) edge (d11);
    \end{tikzpicture}
    \]\[
    \trans{(\mdone^{c}, d_1^\prime)}
    ~
    \begin{tikzpicture}
    \node (d0) {$d_0\,(\sset{ r_4^{(0)} }, 1)$};
    \node (d1) [below of=d0] {$d_1\,({ l_1^{(1)} })$};
    \path (d0) edge (d1);
    \end{tikzpicture}
    ~
    \trans{(\mdone^{f}, d_1)}
    ~
    \begin{tikzpicture}
    \node (d0) {$d_0\,(\sset{ l_2^{(0)}, r_4^{(0)} }, 1)$};
    \end{tikzpicture}
    ~
    \trans{(1, d_0)}
    ~
    \begin{tikzpicture}
    \node (d0) {$\dagger$};
    \end{tikzpicture}
    \]
    \caption{A transition sequence corresponding to $s_2$ (Figure~\ref{fig:gs2c}).}%
    \label{fig:trans-seq-s2}
\end{figure}
}

}

\end{example}


\section{Saturation}\label{sec:sat}

In this section we define a language variant of saturation and show
that languages traced and accepted by $\sata$ satisfy it.
$d_1, d_2\in \D$ will be called \emph{independent} if neither $d_1=\predd{k}{d_2}$ nor $d_2=\predd{k}{d_1}$ for $k\ge 0$, 
i.e. the data lie on different branches. 
Let $\Sigma_O=\Sigma_{OQ}+\Sigma_{OA}$ and $\Sigma_P=\Sigma_{PQ}+\Sigma_{PA}$.
\begin{definition}
We shall say that $L\subseteq (\Sigma\times\D)^\ast$ is \emph{saturated} iff,
for any $w\in L$ and independent $d_1,d_2$,
$w=w_1 (t_1,d_1)(t_2,d_2) w_2\in L$ implies $w_1 (t_2,d_2)(t_1,d_1) w_2\in L$
whenever $t_1\in\Sigma_P$ or $t_2\in \Sigma_O$.
\end{definition}

\begin{remark}
The condition ``$t_1\in\Sigma_P$ or $t_2\in \Sigma_O$'' is the negation
of ``$t_1\in\Sigma_O$ and $t_2\in\Sigma_P$'', i.e. 
the swap is allowed unless the first letter is from $\Sigma_O$ and the second one from $\Sigma_P$.
Note that this is analogous to the game-semantic saturation condition (Definition~\ref{def:sat}).
The definition above uses independent $d_1,d_2$. It would not make sense
to extend it to any dependent cases: one can show that in such cases
the swap will never result in a trace
.
\end{remark}
To show that saturating automata are bound to produce saturated sets of traces/accepted words, we establish a series of lemmas
about commutativity 
between various kinds of transitions.
\begin{lemma}[$\epsilon O\mapsto O\epsilon$]\label{lem:eo}
If $\kappa_1 \trans{\epsilon} \kappa_2 \trans{(t,d)} \kappa_3$ 
and $t\in\Sigma_O$ then 
$\kappa_1\trans{(t,d)}\kappa_2' \trans{\epsilon} \kappa_3$
for some $\kappa_2'$.
\end{lemma}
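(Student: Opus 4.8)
The plan is to perform a careful case analysis on the type of the $\epsilon$-transition $\kappa_1 \trans{\epsilon} \kappa_2$ and the type of the O-labelled transition $\kappa_2 \trans{(t,d)} \kappa_3$, and in each case exhibit the commuted witness $\kappa_2'$. Since $t \in \Sigma_O = \Sigma_{OQ} + \Sigma_{OA}$, the second transition is either an $\add(2i+1)$ transition (reading $q_O$, creating an odd-level leaf $d$) or a $\del(2i)$ transition (reading $a_O$, removing an odd-level leaf $d$); note that in both cases $d$ sits at an odd level. The first transition, being an $\epsilon$-move, is either an $\eps(2j)$ transition or an $\eps(2j,2i')$ transition, and in both cases it acts on some even-level datum $d'$, modifying only $f(d')$ and possibly the memory of an ancestor of $d'$ at an even level.

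First I would record the key structural observation that makes the lemma work: the $\epsilon$-transition only touches even-level nodes and even-level memory cells, whereas the O-transition creates or deletes an \emph{odd}-level leaf $d$ and only inspects/updates $f$ at $d$ and at $\pred{d}$ (an even-level node). The only possible interference is therefore through the even-level node $\pred{d}$: the $\add(2i+1)$ transition requires $\cs{2i} \in_m f(\pred{d})$ and removes one copy of $\cs{2i}$ from it, while the $\del(2i)$ transition adds $\ds{2i}$ to $f(\pred{d})$; meanwhile the $\epsilon$-transition may alter $f(d')$ for $d' = \pred{d}$. I would argue that performing the $\epsilon$-transition first cannot \emph{disable} the O-transition: an $\eps(2j)$ or $\eps(2j,2i')$ transition applied before the O-move acts on the same multiset $f(\kappa_1)(d')$, and since these two operations on a multiset commute (removing/adding distinct control-state copies, with the $\add(2i+1)$ pre-condition $\cs{2i}\in_m f(\pred d)$ being about a copy that the $\epsilon$-step does not consume — here one must be slightly careful and observe that if the $\epsilon$-step did consume the last copy of $\cs{2i}$, then that copy was already present in $\kappa_1$ and was not the one the O-step needs in $\kappa_2$, a contradiction, so actually $f(\kappa_1)(d')$ already contains the needed copy). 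For the $\del(2i)$ case there is no pre-condition on $f(\pred{d})$ at all beyond $f(d)=\cs{2i+1}$, and $d$ is untouched by the $\epsilon$-step since $d$ is odd-level, so reordering is immediate. Symmetrically, doing the O-transition first leaves a configuration on which the original $\epsilon$-transition is still enabled and produces the same final configuration, because all the data-value side conditions ($d \notin D$, $\pred d \in E$, $d$ a leaf) are unaffected by the $\epsilon$-step, and the commuting multiset/memory updates yield $\kappa_3$ on the nose.

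The bookkeeping I would carry out is: fix the datum $d'$ witnessing the $\epsilon$-step and the datum $d$ of the O-step; split on whether $d' = \pred{d}$ or $d' \neq \pred{d}$ (in the latter case the two transitions touch disjoint parts of the configuration and commute trivially, modulo checking that the memory cell touched by an $\eps(2j,2i')$ step is an ancestor of $d'$, hence an even-level node, hence not $d$ and with membership/leaf conditions unaffected). In the case $d' = \pred{d}$, split on O-type ($\add(2i+1)$ vs $\del(2i)$) and on $\epsilon$-type ($\eps$ vs memory-$\eps$), and in each of the (at most four) sub-cases explicitly compute $f'$ and $m'$ along both orders and check they agree — this is the routine part. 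I would also note that when $t \in \Sigma_{OQ}$ creates the brand-new datum $d$, the fresh datum is simply absent from $\kappa_1$ and $\kappa_2$ alike, so ``$d \notin D$'' holds for $\kappa_1$ as well, and when $t \in \Sigma_{OA}$ removes a leaf $d$, that leaf is present already in $\kappa_1$ (the $\epsilon$-step does not change $E$), so the leaf condition transfers.

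The main obstacle I expect is the $\add(2i+1)$ sub-case when $d' = \pred{d}$: one must verify that the $\epsilon$-step, by rewriting $f(\pred d)$, does not destroy the single copy of $\cs{2i}$ that the subsequent $\add(2i+1)$ transition consumes. The resolution — that an $\eps(2j)$ step $\DS{2i}\trans{\epsilon}\ES{2i}$ only needs $\DS{2i}\subseteq_m f(d')$ and afterwards we still need one copy of $\cs{2i}$, so $f(\kappa_1)(d')$ must have contained at least $\DS{2i}$ plus the $\cs{2i}$ copy, whence the O-step is enabled at $\kappa_1$ too — is a small multiset-arithmetic argument that I would state carefully; the memory-$\eps$ variant is even easier since it only shuffles single control states and touches an even-level memory cell disjoint from anything the O-step reads. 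Everything else is symmetric-difference/multiset calculation that I would present compactly in a table mirroring the ADD/DEL/EPS tables in the definition.
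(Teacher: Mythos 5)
Your high-level idea (parity separation: $\epsilon$-steps touch only even-level annotations, O-steps are constrained at odd levels) is the right one, but your case analysis assigns the wrong transitions to $\Sigma_O$, and this matters. In the automaton as defined, O-questions label $\add(2i)$ transitions $\cs{2i-1}\trans{q_O}\DS{2i}$, which create an \emph{even}-level leaf and whose only precondition is $f(\pred{d})=\cs{2i-1}$ at the \emph{odd}-level parent; O-answers label $\del(2i+1)$ transitions $\cs{2i+1}\trans{a_O}\ds{2i}$, which delete an odd-level leaf and \emph{add} $\ds{2i}$ to the parent multiset. The transition you spend most of your effort on --- ``$\add(2i+1)$ with precondition $\cs{2i}\in_m f(\pred{d})$, consuming a copy of $\cs{2i}$'' --- is the $q_P$ transition, which is not in $\Sigma_O$ and hence out of scope for this lemma. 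Because of this mix-up you never actually check the true O-question case (which is immediate: $\epsilon$-transitions never modify odd-level states, nodes, or $D$, so $f(\pred{d})=\cs{2i-1}$ and freshness of $d$ already hold at $\kappa_1$; conversely the new even node and fresh memory cannot disable the pending $\epsilon$-step), and your $a_O$ analysis, while substantively right, is mislabelled as $\del(2i)$ (which reads $a_P$).

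More seriously, the ``resolution'' you give for your self-identified main obstacle is unsound, and the statement it tries to establish is false in general. If an $\eps(2i)$ step $\DS{2i}\trans{\epsilon}\ES{2i}$ fires at $d'=\pred{d}$ and is followed by a transition needing $\cs{2i}\in_m f(d')$, the required copy of $\cs{2i}$ may be one \emph{introduced} by $\ES{2i}$; nothing forces it to be present in $\kappa_1$, so your ``contradiction'' argument does not go through. This is precisely the obstruction the paper points out when explaining why the converse of Lemma~\ref{lem:pe} fails ($\epsilon$ followed by a $\Sigma_P$ letter cannot in general be swapped), and it is exactly why the lemma restricts to $\Sigma_O$: the enabling conditions of $q_O$ and $a_O$ live entirely at odd levels (the parent's single state, resp.\ the leaf's single state), which $\epsilon$-transitions never touch, and O-transitions only ever add even-level states, so they cannot disable the $\epsilon$-step when it is postponed. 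With the polarities corrected, the proof collapses to the short argument in the paper; as written, your proof both analyses the wrong cases and relies on a fallacious step in the one case you flag as hard.
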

\begin{proof}
We need to consider all combinations of the transitions listed below.

\begin{tabular}{c|c}
$\epsilon$ & $O$ \\
\hline
$\DS{2i}\trans{\epsilon}\ES{2i}$ \quad or \quad $(\mop{2j}{h}{v},\cs{2i})\trans{\epsilon}(v',\ds{2i})$ 
& $\cs{2i'-1}\trans{q_O} \DS{2i'}$\quad or \quad $\cs{2i'+1}\trans{a_O}\ds{2i'}$
\end{tabular}

\smallskip

\noindent
Observe that the EPS transitions do not modify states at odd levels
or add nodes. Thus, the $\Sigma_O$ transitions could be fired from $\kappa_1$.
Now note that the $\Sigma_O$ transitions cannot prevent the EPS transitions 
from being executed next, because they do not change states at even levels (though
they add new ones).
\end{proof}
\begin{remark}
The converse to Lemma~\ref{lem:eo} is false.
If a $\Sigma_O$ transition is followed by an $\eps$ transition, it may be impossible to swap them, 
because the latter could rely on states introduced by the former.
\end{remark}
\begin{lemma}[$ P \epsilon \mapsto \epsilon P$]\label{lem:pe}
If $\kappa_1 \trans{(t,d)} \kappa_2 \trans{\epsilon} \kappa_3$ 
and $t\in\Sigma_P$ then 
$\kappa_1\trans{\epsilon}\kappa_2' \trans{(t,d)} \kappa_3$
for some $\kappa_2'$.
\end{lemma}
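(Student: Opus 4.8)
The plan is to proceed by a case analysis on the type of the $\Sigma_P$ transition and the type of the $\epsilon$ transition, exactly mirroring the structure of the proof of Lemma~\ref{lem:eo}, but now exploiting the dual observation: a $\Sigma_P$ transition (either $\add(2i+1)$ of the form $\cs{2i}\trans{q_P}\ds{2i+1}$, or $\del(2i)$ of the form $\DS{2i}\trans{a_P}\dagger$) acts by \emph{consuming} resources (control states, and in the case of $\del$ a whole even-level leaf) from the configuration, whereas the $\epsilon$ transitions ($\DS{2i}\trans{\epsilon}\ES{2i}$ and $(\mop{2j}{h}{v},\cs{2i})\trans{\epsilon}(v',\ds{2i})$) operate purely on even-level annotations and memory of an even-level datum. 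First I would fix the datum $d$ witnessing the $\Sigma_P$ transition and the even-level datum $d'$ witnessing the $\epsilon$ transition, and argue that the $\epsilon$ transition's enabling data at $d'$ was \emph{already present in $\kappa_1$}: the $\Sigma_P$ transition only removes states (and in the $\del$ case removes a leaf while \emph{adding} a state to its parent), so whatever multiset inclusion $\DS{2i}\subseteq_m f(d')$ or membership $\cs{2i}\in_m f(d')$, together with the memory condition $m(\mathit{pred}^{2i-2j}(d'))(h)=v$, held at $\kappa_2$, it must have held already at $\kappa_1$ — we only need the states/memory the $\epsilon$ step reads to survive going backwards, and the $P$-step never deletes the witnesses the $\epsilon$-step needs (it either deletes unrelated states, or in the $\del$ case deletes a leaf, whose states are not those read at $d'$ since $d'$ is still present after the $\del$ and hence distinct from the removed leaf). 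Then fire the $\epsilon$ transition from $\kappa_1$ to obtain $\kappa_2'$.

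Next I would check that the $\Sigma_P$ transition is still enabled at $\kappa_2'$ and produces exactly $\kappa_3$. Here the key point is that the $\epsilon$ transition does not destroy the enabling of the $\Sigma_P$ step: for $\add(2i+1)$ we need $\cs{2i}\in_m f(\pred d)$ to still hold, and for $\del(2i)$ we need $f(d)=\DS{2i}$ (a leaf) to still hold. Since the $\epsilon$ step only touches even-level datum $d'$, the only way it could interfere is if $d'$ coincides with $\pred d$ (in the $\add$ case) or with $d$ (in the $\del$ case), or if the memory cell updated by an $\eps(2j,2i)$ transition lies at $\pred d$. One then observes that, as multisets/memory, applying the two transitions in either order yields the same $f'$ and $m'$: removing $\cs{2i}$ from $f(\pred d)$ and separately rewriting $\DS{2i}\rightsquigarrow\ES{2i}$ (or $\cs{2i'}\rightsquigarrow\ds{2i'}$) inside $f(d')$ commute as operations on multisets, provided the resources consumed are distinct \emph{enough} — and since the $\epsilon$-step's witnesses are still available at $\kappa_2'$ (shown in the previous step) and the $P$-step's witnesses are available at $\kappa_1$, there is no contention: both orders consume the same total multiset. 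The memory component is updated only by the $\epsilon$ transition and left untouched by the $\Sigma_P$ transition, so $m$ evolves identically; the data/shape components $D,E$ are modified only by the $\Sigma_P$ transition and untouched by $\epsilon$, so they agree as well.

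The main obstacle I anticipate is the $\del(2i)$-versus-$\eps(2i)$ (or $\eps(2j,2i)$) interaction when the $\epsilon$ transition operates on the parent $\pred d$ of the leaf $d$ removed by $\del(2i)$: in the original order, the $\del$ transition first adds $\DS{2i}$'s... no — in $\del(2i)$ nothing is added to the parent; it is $\del(2i+1)$ that adds to the parent. So in fact $\del(2i)$ merely removes the leaf $d$ and does not touch $\pred d$ at all in $f$, which makes this case clean. The genuinely delicate case is $\add(2i+1)$ (which removes $\cs{2i}$ from $f(\pred d)$ \emph{and} creates a fresh child $d$) interacting with an $\eps$ transition at $d'=\pred d$ that rewrites some $\cs{2i}$-labelled job at $\pred d$: here I must be careful that the multiset inclusion used by the $\epsilon$-step does not rely on the very copy of $\cs{2i}$ consumed by the $\add(2i+1)$-step. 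Because the $\epsilon$-step was enabled at $\kappa_1$ (before the $\add$ removed anything), its required copies of $\cs{2i}$ are present in $f(\pred d)$ in $\kappa_1$; and since the $\add$-step removes only one copy, the count argument goes through precisely when $f(\pred d)$ in $\kappa_1$ contains enough copies for \emph{both} steps, which it does by additivity of the multiset operations. I would discharge this with a short explicit multiset identity of the shape $\bigl(f(\pred d)\setminus_m\{\cs{2i}\}\bigr)$ with the $\epsilon$-rewrite applied equals the $\epsilon$-rewrite of $f(\pred d)$ with $\{\cs{2i}\}$ removed, relying on $\mu^-(x)=\max(\mu(x)-1,0)$ commuting with the replacement operation on the relevant entries, and conclude $\kappa_2'\trans{(t,d)}\kappa_3$ as required.
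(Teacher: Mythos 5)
Your proposal is correct and takes essentially the same route as the paper's proof: you first argue that the $\epsilon$-transition is already enabled at $\kappa_1$ because a $\Sigma_P$-step introduces no even-level states and never touches memory, and then show the $\Sigma_P$-step still fires after the swap via a multiset-counting argument, the only delicate case being an $\add(2i+1)$ and an $\eps$-step acting on the same node $\pred{d}$ --- which is exactly the paper's ``there must already have been enough copies in the original order'' argument, just spelled out. Two small repairs: the sufficiency of copies of $\cs{2i}$ at $\pred{d}$ in $\kappa_1$ should be drawn from the $\epsilon$-step's enabledness at $\kappa_2$ (i.e.\ \emph{after} one copy has been removed), not from its enabledness at $\kappa_1$ plus ``additivity'', which does not by itself yield $\DS{2i'}\cup_m\{\cs{2i}\}\subseteq_m f(\pred{d})$; and your early parenthetical that a $\del(2i)$ step adds a state to the parent is wrong (only $\del(2i+1)$ does), though you correct this yourself later.
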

\begin{proof}
We inspect the shape of the relevant rules, which are listed below.

\begin{tabular}{c|c}
$P$ & $\epsilon$ \\
\hline
$\cs{2i}\trans{q_P} \ds{2i+1}$\quad or\quad $\DS{2i}\trans{a_P} \dagger$ & $\DS{2i'}\trans{\epsilon}\ES{2i'}$\quad or\quad $(\mop{2j}{h}{v},\cs{2i'})\trans{\epsilon}(v',\ds{2i'})$
\end{tabular}

\smallskip
\noindent
Observe that the $\epsilon$ transitions do not depend on any information
introduced by transitions on $\Sigma_P$. Hence, they are executable 
from $\kappa_1$. 
Note also that they will not
destroy any information needed to execute the $\Sigma_P$ transitions when fired, 
as there must already have been enough copies of any information to fire the transitions in the original order.
\end{proof}
\begin{remark}
The converse to Lemma~\ref{lem:pe} is false:
an $\epsilon$ transition may well be followed by a transition
on $\Sigma_P$ that relies on the states introduced by the $\epsilon$ transition.
\end{remark}
\begin{remark}\label{rem:permut}
One can use Lemmata~\ref{lem:eo} and~\ref{lem:pe} to replace sequences of
transitions of the form $\kappa\trans{(t_1,d_1)}(\trans{\epsilon})^\ast\trans{(t_2,d_2)}\kappa'$
with sequences of transitions between the same configurations
such that the transitions on $(t_1,d_1)$ and $(t_2,d_2)$
will be adjacent. 
\begin{itemize}
\item If $t_1\in\Sigma_P$ then, using Lemma~\ref{lem:pe} repeatedly,
one can obtain $\kappa_1(\trans{\epsilon})^\ast \trans{(t_1,d_1)}\trans{(t_2,d_2)} \kappa'$.

\item If $t_2\in\Sigma_O$ then, using Lemma~\ref{lem:eo} this time,
one can obtain
$\kappa_1\trans{(t_1,d_1)}\trans{(t_2,d_2)}(\trans{\epsilon})^\ast \kappa'$.
\end{itemize}
Note that these transformations require either
$t_1\in\Sigma_P$ or $t_2\in\Sigma_O$,
so they cannot be carried out if $t_1\in\Sigma_O$ and $t_2\in\Sigma_P$.
\end{remark}
Next we examine permutability of  consecutive transitions
involving independent data values. 
\begin{lemma}\label{lem:swap}
Suppose $d_1,d_2$ are independent and
$\kappa_1 \trans{(t_1,d_1)} \kappa_2\trans{(t_2,d_2)}\kappa_3$, where $t_1\in\Sigma_P$ or $t_2\in \Sigma_O$. 
Then there exists $\kappa_2'$ such that
$\kappa_1 \trans{(t_2,d_2)} \kappa_2'\trans{(t_1,d_1)}\kappa_3$.
\end{lemma}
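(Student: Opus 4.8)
The plan is to proceed by a careful case analysis on the types of the two transitions on $(t_1,d_1)$ and $(t_2,d_2)$, exploiting the fact that each transition only inspects and modifies the configuration \emph{locally} — along the branch from $d_i$ up to its relevant ancestor — together with the hypothesis that $d_1,d_2$ lie on different branches. First I would observe that every transition ($\add$, $\del$) of the automaton touches only the node $d_i$, its parent $\pred{d_i}$, and (implicitly) the presence of $\pred{d_i}$ in $E$; no transition inspects or alters data on an independent branch. So the key structural fact to extract is: \emph{the read-set and write-set of a transition on $d_i$ are contained in the path from $d_i$ to a root, and since $d_1,d_2$ are independent these paths are disjoint except possibly below the fork point — but the only shared nodes would be common ancestors, and no $\add$/$\del$ transition writes to a proper ancestor of its data value's parent.} Actually $\add(2i+1)$ and $\del(2i)$ do write to $\pred{d}$; but $\pred{d_1}$ and $\pred{d_2}$ are themselves independent or equal only if $d_1,d_2$ are siblings, which is excluded since siblings lie on branches sharing the parent — wait, independence as defined forbids one being an iterated predecessor of the other, so siblings \emph{are} independent. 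Hence I must handle the sibling case separately: if $\pred{d_1}=\pred{d_2}=n$, then both transitions may touch $f(n)$, and I need the multiset operations $\cup_m,\setminus_m$ at $n$ to commute under the standing hypothesis $t_1\in\Sigma_P$ or $t_2\in\Sigma_O$.

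The main steps, in order, are: (1) Reduce to the case where neither transition is an $\epsilon$-transition — but in fact the lemma is stated only for genuine letter transitions, so this is automatic; the interaction with $\epsilon$-transitions is already covered by Lemmas~\ref{lem:eo} and~\ref{lem:pe} and Remark~\ref{rem:permut}. (2) Split on whether $d_1,d_2$ are siblings (i.e. $\pred{d_1}=\pred{d_2}$) or genuinely branch-disjoint even at the parent level. (3) In the branch-disjoint case, note the footprints are disjoint, so the two transitions trivially commute: firing $(t_2,d_2)$ from $\kappa_1$ is possible because its precondition refers only to $d_2$'s branch, which $(t_1,d_1)$ did not modify; symmetrically afterwards; and the resulting $\kappa_3$ is the same since disjoint updates commute. (4) In the sibling case, the shared resource is $f(n)$ where $n=\pred{d_1}=\pred{d_2}$ (and possibly $m(n)$, but $m$ is only touched by $\eps$-transitions, not here). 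Here I enumerate the relevant combinations of $\{\add(2i+1),\del(2i)\}\times\{\add(2i'+1),\del(2i')\}$ — note a sibling pair forces both $d_1,d_2$ to be at the same odd or even level, constraining the combinations — and check that under ``$t_1\in\Sigma_P$ or $t_2\in\Sigma_O$'' the multiset updates at $n$ commute. The crucial sub-case is $t_1$ an $\add(2i+1)$ (a $q_P$, which removes $\cs{2i}$ from $f(n)$) followed by $t_2$ a $\del(2i')$ (an $a_O$, which adds $\ds{2i'}$ to $f(n)$): removing then adding from a multiset commutes with adding then removing, \emph{provided} the element removed is still present after the add — which holds because it was present before, and adding cannot remove it. The forbidden combination $t_1\in\Sigma_O$, $t_2\in\Sigma_P$ corresponds exactly to $t_1$ being a $\del$ adding a state that $t_2$'s $\add$ then consumes: there the swap can fail because $t_2$'s precondition $\cs{2i}\in_m f(n)$ may only be met \emph{because} of $t_1$'s contribution, which is precisely why the hypothesis excludes it.

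I expect the main obstacle to be bookkeeping the sibling case cleanly: one must also confirm that the \emph{non-$f(n)$} parts of the updates — namely setting $f'(d_i)$, adjusting $D,E$ — never conflict (they don't, since $d_1\ne d_2$ and each transition owns its own node), and one must be careful that for $\del$ transitions the precondition ``$d_i$ is a leaf in $E$'' is preserved under the other transition: firing $(t_1,d_1)$ cannot make $d_2$ a non-leaf, since the only way to add a child of $d_2$ is an $\add$ transition on a \emph{fresh} data value whose parent is $d_2$, and here the transition's data value is $d_1\ne d_2$ with $\pred{d_1}=n\ne d_2$. Likewise an $\add$ on $d_1$ requires $\pred{d_1}\in E$, and removal of $d_2$ by a $\del$ on $d_2$ does not remove $n=\pred{d_1}$. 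Collecting these observations, the reassembly $\kappa_1 \trans{(t_2,d_2)}\kappa_2'\trans{(t_1,d_1)}\kappa_3$ goes through with $\kappa_2'$ obtained by applying $t_2$'s update to $\kappa_1$, and a direct computation shows the composite reaches the same $\kappa_3$.
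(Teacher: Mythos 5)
Your proposal is correct and follows essentially the same route as the paper's proof: a split on $\pred{d_1}\neq\pred{d_2}$ (disjoint footprints, trivial commutation) versus siblings, followed by a polarity-constrained case analysis at the shared parent, with the $q_P\,a_O$ case handled by commuting $\setminus_m$ and $\cup_m$ (valid since the removed state was already present) and the excluded $a_O\,q_P$ case identified as the genuine obstruction. The only blemishes are notational (e.g.\ writing $\del(2i')$ for the odd-level $a_O$ transition, and an enumeration set that mixes levels), which do not affect the substance.
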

\begin{proof}
Recall that non-$\epsilon$ transitions rely only on two consecutive levels of the configuration tree.
Consequently, if $d_1, d_2$ are independent and $\pred{d_1}\neq \pred{d_2}$ then the transitions operate on disjoint regions of the configuration and can be swapped.

Now suppose $\pred{d_1}=\pred{d_2}$ and note that, because of independence, we have $d_1\neq d_2$. Consequently, the transitions must operate at the same level
and concern different children of the same node.
\begin{itemize}
\item If the level is even, we need to consider 
the following combinations of transitions:
$\add(2i)\,\add(2i)$, $\del(2i)\,\add(2i)$,
$\del(2i)\,\del(2i)$ (other cases can be ignored
due to the $t_1\in\Sigma_P$ or $t_2\in\Sigma_O$ constraint).
Recalling that $\add(2i)$ and $\del(2i)$ transitions have the form
$\cs{2i-1}\trans{q_O} \DS{2i}$ and $\DS{2i}\trans{a_P} \dagger$ respectively,
we can confirm that the Lemma holds, because 
the state $\cs{2i-1}$ associated with $\pred{d_1}=\pred{d_2}$ is not modified
and there is no scope for interference between the transitions.

\item If the level is odd, we need to consider 
the following combinations of transitions:
$\add(2i+1)\add(2i+1)$, $\add(2i+1)\del(2i+1)$,
$\del(2i+1)\del(2i+1)$ (other cases can be ignored
due to the $t_1\in\Sigma_P$ or $t_2\in\Sigma_O$ constraint).
Recalling that $\add(2i+1)$ and $\del(2i+1)$ transitions have the form
$\cs{2i}\trans{q_P} \ds{2i+1}$ and $\cs{2i+1}\trans{a_O} \ds{2i}$ respectively,
we can confirm that the Lemma holds, because the transitions will not
interfere. In particular, due to $d_1\neq d_2$,
the $\del(2i+1)$ transition in $\add(2i+1)\del(2i+1)$ cannot use
the state introduced by the preceding $\add(2i+1)$ transition.
\end{itemize}
\end{proof}

\begin{remark}
Note that the ``$t_1\in \Sigma_P$ or $t_2\in\Sigma_O$'' condition is necessary:
in the $\del(2i+1)\,\add(2i+1)$ case (i.e. $a_O q_P$), 
it is possible
for the latter transition to use the target state of the former.
\end{remark}

\begin{thm}
For any $\sata$ $\Aut$, the sets $\trace{\Aut},\lang{\Aut}$ 
are saturated.
\end{thm}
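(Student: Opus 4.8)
The plan is to show that both $\trace{\Aut}$ and $\lang{\Aut}$ are closed under a single legal swap of adjacent letters $(t_1,d_1)(t_2,d_2)$ with $d_1,d_2$ independent and $t_1\in\Sigma_P$ or $t_2\in\Sigma_O$; iterating this gives the full saturation property. So let $w=w_1(t_1,d_1)(t_2,d_2)w_2$ be a trace (resp.\ accepted trace), witnessed by a run $\kappa_0\trans{l_1}\cdots\trans{l_h}\kappa_h$ in which the letters of $w$ appear interleaved with $\epsilon$-transitions. The run factors as $\kappa_0 \to^\ast \alpha \trans{(t_1,d_1)} \cdots \trans{(t_2,d_2)} \beta \to^\ast \kappa_h$, where the middle segment between the two visible letters consists solely of $\epsilon$-transitions: $\alpha \trans{(t_1,d_1)}(\trans{\epsilon})^\ast\trans{(t_2,d_2)}\beta$.

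First I would apply Remark~\ref{rem:permut} to push the $\epsilon$-transitions out of the way so that the two visible transitions become adjacent. Concretely: if $t_1\in\Sigma_P$, Lemma~\ref{lem:pe} lets me move every intervening $\epsilon$-transition to before $(t_1,d_1)$, yielding $\alpha (\trans{\epsilon})^\ast \alpha' \trans{(t_1,d_1)}\trans{(t_2,d_2)}\beta$; if instead $t_2\in\Sigma_O$, Lemma~\ref{lem:eo} lets me move every intervening $\epsilon$-transition to after $(t_2,d_2)$, yielding $\alpha \trans{(t_1,d_1)}\trans{(t_2,d_2)}(\trans{\epsilon})^\ast\beta$. (At least one of the two cases applies by hypothesis on $t_1,t_2$; this is exactly where the direction of the swap constraint is used.) Either way I now have a run of the same total length visiting the same set of configurations, of the form $\gamma_1 \trans{(t_1,d_1)} \gamma_2 \trans{(t_2,d_2)} \gamma_3$ sitting inside the global run, with everything before $\gamma_1$ and after $\gamma_3$ unchanged.

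Next I would invoke Lemma~\ref{lem:swap}: since $d_1,d_2$ are independent and the swap constraint $t_1\in\Sigma_P$ or $t_2\in\Sigma_O$ holds, there is $\gamma_2'$ with $\gamma_1 \trans{(t_2,d_2)} \gamma_2' \trans{(t_1,d_1)} \gamma_3$. Splicing this back into the global run, and then (if I pre-/post-pended $\epsilon$-transitions in the previous step) leaving those $\epsilon$-transitions where they now sit, produces a valid run of $\Aut$ from $\kappa_0$ whose visible word is $w_1(t_2,d_2)(t_1,d_1)w_2$. Since the run starts at $\kappa_0$, this word is in $\trace{\Aut}$; and since the final configuration $\kappa_h$ is unchanged (in particular still accepting if it was before, i.e.\ $E=\emptyset$), the word is in $\lang{\Aut}$ whenever $w$ was. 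Closing under arbitrary such swaps then yields saturation of both sets.

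The only mildly delicate point — and the step I would be most careful about — is the bookkeeping in the first step: I must check that relocating the block of $\epsilon$-transitions past $(t_1,d_1)$ or $(t_2,d_2)$ via Lemmata~\ref{lem:pe}/\ref{lem:eo} does not disturb the rest of the run. This is immediate because those lemmas replace the relevant three-transition segment by another segment with the \emph{same} endpoints, so all surrounding transitions remain firable verbatim; there is no interaction with $w_1$ or $w_2$. The genuine combinatorial content is entirely absorbed into the already-proved Lemmata~\ref{lem:eo}, \ref{lem:pe} and \ref{lem:swap}, so the theorem follows by routine assembly.
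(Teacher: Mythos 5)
Your proposal is correct and follows essentially the same route as the paper's proof: make the two visible transitions adjacent by commuting the intervening $\epsilon$-transitions via Lemmata~\ref{lem:eo} and~\ref{lem:pe} (i.e.\ Remark~\ref{rem:permut}), swap them with Lemma~\ref{lem:swap}, and observe that the endpoints of the modified segment — and hence the final (accepting) configuration — are unchanged, so both $\trace{\Aut}$ and $\lang{\Aut}$ are closed under the swap. The extra bookkeeping you spell out (splicing the segment back into the global run) is exactly what the paper leaves implicit.
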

\begin{proof}
Consider $t_1,t_2,d_1,d_2$ such that 
$t_1\in \Sigma_P$ or $t_2\in\Sigma_O$,
$d_1, d_2$ are independent and 
$w_1 (t_1,d_1)(t_2,d_2) w_2\in\trace{\Aut}$.
Thus, there exist $\kappa_1,\kappa_2$ such that
$\kappa_1 \trans{(t_1,d_1)}(\trans{\epsilon})^\ast\trans{(t_2,d_2)}\kappa_2$.
By Remark~\ref{rem:permut}, we can rearrange the transitions to get
$\kappa_1 (\trans{\epsilon})^\ast \trans{(t_1,d_1)}\trans{(t_2,d_2)}
(\trans{\epsilon})^\ast\kappa_2$.
By Lemma~\ref{lem:swap}, we then obtain
$\kappa_1 (\trans{\epsilon})^\ast \trans{(t_2,d_2)}\trans{(t_1,d_1)}
(\trans{\epsilon})^\ast\kappa_2$, i.e.
$w_1 (t_2,d_2)(t_1,d_1) w_2 \in\trace{\Aut}$.
Hence, $\trace{\Aut}$ is saturated.
As $\lang{\Aut}$ is a subset of $\trace{\Aut}$ in which
all questions have answers, $\lang{\Aut}$ is also saturated,
because the  swaps do not affect membership in $\lang{\Aut}$.
\end{proof}

\begin{remark}\label{rem:discussion}
Earlier proposals for automata models of $\fica$~\cite{DLMW21,DLMW21b} 
failed to satisfy saturation.
In retrospect, this was because they allowed for too much communication
between control states at various levels.

Leafy automata~\cite{DLMW21} could  access the whole branch 
of the configuration tree at each transition and modify it during transition.
In particular, each move could access and update the state at the root.
This feature could easily be used to define leafy automata that
are very rigid and not closed under any kind of transition swaps. 
Local leafy automata, also introduced in \cite{DLMW21}, restrict access only to the local part of the branch but still allow communication (thus preventing swaps) between nodes sharing a parent or great-grandparent.

Split automata~\cite{DLMW21b} in turn featured restricted access
to control states at various levels, but their transitions still
allowed for state-based communication between siblings,
through transitions $\cs{2i}\trans{q_P} (\ds{2i},\ds{2i+1})$
and $(\cs{2i},\cs{2i+1})\trans{a_O} \ds{2i}$.
The first rule could be used to create two child nodes in
a specific order only, violating Lemma~\ref{lem:swap} for $t_1, t_2\in \Sigma_P$.
The second rule could be used to delete child nodes in a specific
order only, violating the same lemma for $t_1, t_2\in \Sigma_O$.
Finally, the fact that the two rules can communicate through level $2i$
means that we can make the second one conditional on the first one,
meaning that Lemma~\ref{lem:swap} would be violated for $t_1\in \Sigma_P$
and $t_2\in \Sigma_O$.  Consequently, split automata did not offer
native support for saturation, regardless of the polarity of letters.
\end{remark}

\section{From $\fica$ to $\sata$}\label{sec:fica2sata}

In this section we provide an inductive translation from $\fica$ to $\sata$. The main result states
that, for terms in normal form, the construction can be carried out in quadratic time and the automata 
have linearly many states and transitions (with respect to term size).

First, we describe how to encode justification pointers in plays using data and a special
indexing scheme. Recall from Section~\ref{sec:gs} that, to interpret base types, game semantics uses moves from the set
\[\begin{array}{rcl}
\moveset &=& M_{\sem{\comt}}\cup M_{\sem{\expt}}\cup M_{\sem{\vart}} \cup M_{\sem{\semt}}\\
&=&\{\, \mrun,\, \mdone,\, \mq,\, \mread,\, \mgrb,\, \mrls,\, \mok\, \} \cup \{\,i,\, \mwrite{i}{}\,|\, 0\le i \le \max\,\}.
\end{array}\]
The game-semantic interpretation of 
a term-in-context $\seq{\Gamma}{M:\theta}$ is a strategy over the arena $\sem{\seq{\Gamma}{\theta}}$,
which  is obtained through product and arrow constructions,  starting from arenas corresponding to base types.
As both constructions rely on the disjoint sum, the moves from $\sem{\seq{\Gamma}{\theta}}$ are derived
from  the base types present in types inside $\Gamma$ and $\theta$.
To indicate the exact occurrence of a base type from which each move originates, we will  annotate elements of $\moveset$ with
a specially crafted scheme of superscripts.
Suppose  $\Gamma=\{x_1:\theta_1,\cdots, x_l:\theta_l\}$.
The superscripts will have one of the two forms,  where $\vec{i}\in\N^\ast$ and $\rho\in\N$:
\begin{itemize}
\item $(\vec{i},\rho)$ will represent moves from $\theta$;
\item $(x_v\vec{i}, \rho)$ will  represent moves from $\theta_v$ ($1\le v\le l$).
\end{itemize}
The annotated moves will be written as $m^{(\vec{i},\rho)}$ or $m^{(x_v\vec{i},\rho)}$, where $m\in\moveset$.
We will sometimes omit $\rho$ on the understanding that this represents $\rho=0$.
Similarly, when $\vec{i}$ is omitted, the intended value is~$\epsilon$, e.g. $m$ stands for $m^{(\epsilon,0)}$
and $m^x$ for $m^{(x,0)}$. The next definition explains how the $\vec{i}$ superscripts are
linked to moves from $\sem{\theta}$.
Given $X\subseteq \{ m^{(\vec{i},\rho)} \,|\, \vec{i}\in\N^\ast,\,\rho\in\N\}$ and $y\in \N\cup \{x_1,\cdots, x_l\}$, 
we let $yX = \{m^{(y\vec{i},\rho)}\,|\, m^{(\vec{i},\rho)}\in X\}$.
\begin{definition}\label{def:tags}
Given a type $\theta$, the corresponding alphabet $\alp{\theta}$ is defined as follows
\[\begin{array}{rcll}
\alp{\beta} &=& \{\, m^{(\epsilon,\rho)}\,|\, m\in M_{\sem{\beta}},\,\rho\in\N\,\}\qquad \beta=\comt,\expt,\vart,\semt\\
\alp{\theta_l\rarr\ldots\rarr\theta_1\rarr\beta} &=& \bigcup_{u=1}^l (u\alp{\theta_u}) \cup \alp{\beta}
\end{array}\]
For $\Gamma=\{x_1:\theta_1,\cdots, x_l:\theta_l\}$,
the alphabet $\alp{\seq{\Gamma}{\theta}}$ is defined to be 
$\alp{\seq{\Gamma}{\theta}}=\bigcup_{v=1}^l (x_v \alp{\theta_v}) \cup \alp{\theta}$.
\end{definition}
\begin{example}
Given $\Gamma=\{f:\comt\rarr\comt, c:\comt\}$, we have
\[\alp{\seq{\Gamma}{\expt}}=
\{ \mrun^{(f1,\rho)}, \mdone^{(f1,\rho)}, \mrun^{(f,\rho)}, \mdone^{(f,\rho)}, \mrun^{(c,\rho)},
\mdone^{(c,\rho)}, \mq^{(\epsilon,\rho)}, i^{(\epsilon,\rho)}\,|\,
0\le i\le\imax,\,\rho\in \N\, \}.
\]
\end{example}
Note that $\alp{\seq{\Gamma}{\theta}}$ admits a natural partitioning into $X$-questions and
$X$-answers ($X\in\{O,P\}$), depending on whether the underlying move is an $X$-question or an $X$-answer.
To represent the game semantics of terms-in-context $\seq{\Gamma}{M:\theta}$,
we will represent plays as words over $\Sigma\times\D$, where $\Sigma$ is a finite subset of $\alp{\seq{\Gamma}{\theta}}$.
Only a finite subset will be needed, because $\rho$ will be bounded.

Next we explain how $\rho$ and data will be used to represent
justification pointers.
Because no data value can be used twice with a question,
occurrences of questions correspond to unique data values.
A justification pointer from an answer to a question 
can then be represented simply by pairing up the same data value 
with the answer. Pointers from question-moves will be 
represented with the help
of the index $\rho$.
Initial question-moves do not have a pointer and to represent such questions we simply use $\rho=0$.
To represent moves with justification pointers,
we will rely on $\rho$ on the understanding that
$(m^{(y,\rho)},d)$ represents a pointer to the unique 
question-move that introduced $\mathit{pred}^{\rho+1}(d)$.
The reader may wish to check that Example~\ref{ex:s2word} does follow
this convention (therein $m^x$ stands for $m^{(x,0)}$).
Below we give another example involving $\rho>0$, which
may arise in our translation for certain P-moves.
\begin{example}
The play
$\rnode{Z}{\mq}\,\,\,
\rnode{A}{\mrun^f}\justn{A}{Z}{140}\,\,\,
\rnode{B}{\mrun^{f1}}\justn{B}{A}{140} \,\,\,
\rnode{D}{\mrun^{c}}\justn{D}{Z}{140}$
can be represented by
$(\mrun^{(\epsilon,0)},d_0)$ $(\mrun^{(f,0)},d_1)$ $(\mrun^{(f1,0)}, d_2)$ $(\mrun^{(c,2)},d_3)$, given $\pred{d_{i+1}}=d_i$ ($0\le i\le 2$).
\end{example}

Below we state the main result linking $\fica$ with saturating automata.
Question-moves in this translation are handled with ADD transitions:
$\add(2i)$ and $\add(2i+1)$ correspond to O- and P-questions respectively.
Answer-moves are processed with DEL transitions:
$\del(2i)$ for P-answers and $\del(2i+1)$ for O-answers.

\begin{theorem}\label{thm:fica2sata}
For any $\fica$ term $\seq{\Gamma}{M:\theta}$
there exists a $\sata$ $\Aut_M$ 
over a finite subset of $\alp{\seq{\Gamma}{\theta}}$ 
such that the set of plays represented by words from $\trace{\Aut_M}$ 
is $\sem{\seq{\Gamma}{M:\theta}}$, and $\lang{\Aut_M}$ represents $\comp{\sem{\seq{\Gamma}{M:\theta}}}$.
Moreover, when $M$ is in $\beta$-normal $\eta$-long form\footnote{A term is in $\beta$-normal form if none of its subterms is a $\beta$-redex, and it is $\eta$-long if all occurrences of function identifiers inside the term are fully applied. For every term one can obtain a corresponding $\beta$-normal $\eta$-long form by $\beta$-reduction and $\eta$-expansion; these reductions preserve equivalence.}, 
$\Aut_M$ has linearly many states and transitions, 
and can be constructed in quadratic time.
\end{theorem}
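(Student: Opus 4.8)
The plan is to construct $\Aut_M$ by induction on the structure of $M$, maintaining the invariant that a play over $\sem{\seq{\Gamma}{\theta}}$ lies in $\sem{\seq{\Gamma}{M:\theta}}$ if and only if its data‑word representations (under the scheme fixed via $\alp{\seq{\Gamma}{\theta}}$ and $\rho$) lie in $\trace{\Aut_M}$; the claim about $\lang{\Aut_M}$ is then immediate, since accepting configurations are exactly the empty ones, so accepted words are exactly the representations of complete plays (as already noted after the definition of traces). For the efficiency part we may assume $M$ is $\beta$‑normal $\eta$‑long: the game model of \cite{GM08} is a categorical model of the simply typed $\lambda$‑calculus, so $\beta$‑reduction and $\eta$‑expansion are equalities of strategies, $\sem{\seq{\Gamma}{M}}=\sem{\seq{\Gamma}{M'}}$ for the normal form $M'$, and we take $\Aut_M=\Aut_{M'}$. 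In a $\beta$‑normal $\eta$‑long term every subterm of base type is either a constant ($\skipcom$, $\divcom$, $i$), an imperative construct applied to normal‑form arguments ($\arop{-}$, $-\parc-$, $-;-$, $\cond{-}{-}{-}$, $\while{-}{-}$, $-\aasg-$, $\deref{-}$, $\grb{-}$, $\rls{-}$, $\newvar{x}{-}$, $\newsem{x}{-}$), or a fully applied identifier occurrence $x\,N_1\cdots N_r$; abstractions occur only at the top or as arguments. The translation is then \emph{syntax‑directed}, contributing one small gadget per node of $M$.

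The key cases are the following. For an identifier occurrence $x\,N_1\cdots N_r$ we use a copycat‑style gadget: P mirrors the incoming question with the corresponding move on $x$ via an $\add(2i{+}1)$/$\del(2i)$ pair into $x$'s component of the alphabet, and each argument $N_j$ is translated recursively and re‑based to the level dictated by the arena $\sem{\theta_1}\times\cdots\times\sem{\theta_l}\Rightarrow\sem{\theta}$, with the superscripts recording which base‑type occurrence a move comes from and $\rho$ recording the justifier of P‑questions that do not point to the immediate parent. An abstraction $\lambda x.M$ is handled by merely re‑tagging the moves of $\Aut_M$ that concern $x$. Parallel composition $M_1\parc M_2$ is the crucial case: the even‑level node created for $\parc$ spawns \emph{two sibling branches}, one running $\Aut_{M_1}$ and one $\Aut_{M_2}$, with the multiset at that node carrying the pending states of both; since the branches are disjoint subtrees there is \emph{no product}, and $|\Aut_{M_1\parc M_2}|=|\Aut_{M_1}|+|\Aut_{M_2}|+O(1)$. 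Sequential composition $M_1;M_2$ is similar, but the branch for $M_2$ is spawned only after the branch for $M_1$ has been answered and removed. Local store $\newvar{x}{M}$ (and $\newsem{x}{M}$) allocates one memory cell at the enclosing even level and reroutes the $x$‑moves: $\mread$/$\mwrite{i}$ and their answers become $\eps(2j,2i)$ transitions reading and updating that cell, so memory is \emph{never encoded in control states}; $\grb{x}$/$\rls{x}$ become guarded $\epsilon$‑transitions on a $\{0,1\}$‑valued cell. The remaining cases ($\arop{-}$, conditional, while, and assignment/dereferencing/grab/release on a free variable) are routine: translate the subterm(s) and append a constant number of $\del$/$\eps$ transitions implementing the finite‑datatype operation. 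The worked instance in \cref{ex:aut} illustrates exactly this shape for the term of \cref{ex:term}.

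Correctness is proved by induction, matching each gadget against the compositional description of $\sem{\seq{\Gamma}{M}}$ in \cite{GM08}: interleaving of sub‑strategies for $\parc$, linear composition for $;$, composition with the store strategy for $\newvar$/$\newsem$, copycat composition for identifier occurrences, and currying for $\lambda$. The structural discipline of $\sata$ runs mirrors well‑formedness of plays: a leaf can be removed only once every job below it has terminated, which is \textbf{WAIT}; and $\add$ transitions only attach a child to an existing node, which is \textbf{FORK}; the parity split (O at odd levels, P at even) matches the polarity of moves in the arena, and the internal moves hidden by composition, as well as the $x$‑moves under $\newvar$/$\newsem$, are realised by $\epsilon$‑transitions and memory operations rather than visible letters. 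Saturation of $\trace{\Aut_M}$ (the saturation theorem of \cref{sec:sat}) is then a consistency check against the saturation of $\sem{\seq{\Gamma}{M}}$. For complexity, each node of a $\beta$‑normal $\eta$‑long term contributes $O(1)$ control states and transitions, except that $x\,N_1\cdots N_r$ and an $r$‑ary construct contribute $O(r)$ and $\arop{-}$/conditionals contribute $O(\imax)$ (a constant, $\imax$ being fixed by the language); summing over nodes gives linearly many states and transitions in $|M|$. Assembling $\Aut_M$ amounts to recursively renaming the gadgets' states apart and taking unions of transition sets; each union costs at most linear time and there are linearly many of them, whence quadratic total time. The decisive points are that $\parc$ uses sibling branches rather than a product (avoiding the blow‑up of \cite{DLMW21b}) and that $\newvar$/$\newsem$ use memory cells rather than control states (avoiding the exponential blow‑up of \cite{DLMW21}).

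The main obstacle I expect is the higher‑order identifier case together with the bookkeeping that keeps levels, O/P parity and the superscript‑and‑$\rho$ pointer encoding exactly in step with $\sem{\theta_1}\times\cdots\times\sem{\theta_l}\Rightarrow\sem{\theta}$: every recursive call for an argument $N_j$ must be transplanted to the correct depth, polarity must flip with each level, and P‑questions whose justifier is an ancestor rather than the parent must carry the right $\rho$; verifying that the gadget produces precisely the intended subset of $P_{\sem{\seq{\Gamma}{\theta}}}$ is the fiddly heart of the argument. A second, more semantic, obstacle is the $\newvar$/$\newsem$ case: showing that rerouting the $x$‑moves to $\eps(2j,2i)$ memory transitions yields exactly the strategy obtained by composing $\sem{\seq{\Gamma,x:\vart}{M}}$ with the store, i.e.\ that the cell faithfully realises the ``good variable'' behaviour across all interleavings permitted by concurrency.
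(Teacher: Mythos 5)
You follow essentially the paper's route: induction on $\beta$-normal $\eta$-long forms, the root multiset interleaving the two subautomata for $\parc$, $\epsilon$-gluing for sequential composition, a level-shifted embedding (with $\rho$ adjusted) of $\Aut_{M_1}$ for fully applied identifiers, and memory cells absorbing the $x$-moves for $\newvar{x}{M_1}$ and $\newsem{s}{M_1}$. The genuine gap is that you carry no strengthened inductive invariants, and two of your steps need them. First, correctness of the gluing for $M_1;M_2$ (and of the final $\mdone$ in $M_1\parc M_2$): the switch to $M_2$ is an $\epsilon$-transition enabled as soon as the final multiset $\DS{0}_1$ of $\Aut_{M_1}$ is contained in the root label, which a priori could happen while children of the root (pending jobs of $M_1$) still exist; your sentence ``the branch for $M_2$ is spawned only after the branch for $M_1$ has been answered and removed'' asserts exactly what has to be proved. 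The paper secures this with the final-readiness invariant $\fa$, maintained through every case of the induction. Second, your complexity accounting ``each node contributes $O(1)$ control states and transitions'' is not true as stated for $\newvar{x}{M_1}$: that case rewrites \emph{every} $x$-related transition of $\Aut_{M_1}$, and each new $\epsilon$-transition is assembled from a pair $\cs{2i}\trans{\mwrite{j}^{(x,\rho)}}\ds{2i+1}$, $\ds{2i+1}\trans{\mok^{(x,0)}}\ees{2i}$ (similarly for $\mread$ and its value answers), with the read/write cases additionally fanning out over the $\imax+1$ possible cell contents. Without further information this pairing can multiply, since several question transitions may enter, and several answer transitions may leave, the same intermediate odd-level state. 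The paper's invariants $\oa$ (O-answer determinacy) and $\pq$ (P-question pre-determinacy), together with a wildcard representation of the value fan-out, make the rewrite one-for-one; only then do the linear transition count and the quadratic construction time follow.

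A smaller inaccuracy: configuration-tree nodes are created only by reading questions via $\add$ transitions, never silently, so for $M_1\parc M_2$ one cannot literally ``spawn two sibling branches'' for the subterms. Both level-$0$ state sets must cohabit the root multiset (as your own remark about the multiset at that node already suggests), the two local memories are combined by offsetting cell indices ($N=N_1+N_2$), and the disjoint subtrees only arise later from the P-questions of the respective subautomata. This is how the paper's construction works, and it is precisely what makes the $\fa$ invariant non-trivial to maintain.
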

\begin{proof}
It follows from analogous results for the simply-typed $\lambda$-calculus that any $\fica$ term can be reduced to an equivalent term in $\beta$-normal $\eta$-long form.  The argument proceeds by induction on the structure of such forms.
When referring to the inductive hypothesis for a subterm $M_i$,
we use the subscript $i$ to refer to the automata components, 
e.g. $\CS{j}_i$,  $\trans{m}_i$ etc.
In contrast, $\CS{j}$, $\trans{m}$ (without subscripts) will refer to the automaton that is being constructed.
Inference lines $\frac{\qquad}{\qquad}$ indicate that the transitions listed under the line should be added
to the new automaton provided the transitions listed above the line are present in the automaton obtained from the inductive hypothesis.

The following three invariants
that strengthen the inductive hypothesis help us establish
correctness and the requisite complexity. They concern labelled
transitions only.
\begin{itemize}
\item $\oa$ (OA determinacy): 
if $\cs{2i+1}\trans{a_O} \ds{2i}_1$ and
$\cs{2i+1}\trans{a_O} \ds{2i}_2$ then $\ds{2i}_1=\ds{2i}_2$.
\item $\pq$ (PQ pre-determinacy):
if $\cs{2i}_1\trans{q_P}\ds{2i+1}$ and $\cs{2i}_2\trans{q_P}\ds{2i+1}$
then $\cs{2i}_1=\cs{2i}_2$.
\item $\fa$ (final readiness): for every $\DS{0}\trans{a_P}\dagger$,
i.e. where $a_P$ is a \emph{final answer},
whenever the automaton reaches a configuration $(D,E,f,m)$ 
with $\DS{0}\subseteq f(r)$, where $r$ is the root, then the transition can be executed, i.e. $r$ has no children and $f(r)=\DS{0}$.
\end{itemize}
Below we discuss a selection of  cases.
In the first three cases, the corresponding automaton merely needs
to respond to the initial question with a suitable answer or not respond at all
(for $\divcom_\theta$).

\paragraph{$\mathbf{M\equiv\skipcom}$:} $k=0$, $N=0$, $\cn{0}=\{0\}$,
$\delta$ consists of $\dagger\trans{\mrun} \{0\}$ and $\{0\}\trans{\mdone} \dagger$.

\paragraph{$\mathbf{M\equiv i}$:} $k=0$, $N=0$, $\cn{0}=\{0\}$, $\delta$ consists of
$\dagger\trans{\mq} \{0\}$ and $\{0\}\trans{i} \dagger$.

\paragraph{$\mathbf{M\equiv \divcom_\theta}$:} $k=0$, $N=0$, $\cn{0}=\{0\}$.
Supposing $\theta\equiv \theta_l\rarr\cdots\rarr\theta_1\rarr\beta$, 
recall that $I_{\sem{\beta}}$ stands for the set of initial questions in $\sem{\beta}$.
$\delta$ is then given by
\[
\frac{x\in I_{\sem{\beta}}}{
\dagger\trans{x} \{0\} 
}\]
$\pq$ and $\oa$ hold vacuously in the above cases, as they do not feature the relevant transitions.
$\fa$ is also clearly satisfied.

\paragraph{$\mathbf{M\equiv \arop{M_1}}$:}
$k=k_1$, $N=N_1$, $\CS{j}=\CS{j}_1$ ($0\le j\le k$).
In this case, we only need to adjust the final answers,
i.e. we take all transitions for $\Aut_{M_1}$ except $\del(0)$,
and modify the $\del(0)$ transitions as follows.
\[
\frac{\DS{0} \trans{i}_1 \dagger}{ \DS{0}\trans{\widehat{\mathbf{op}}(i)} \dagger}
\]
The above relabelling does not concern transitions relevant 
to $\oa$ and $\pq$, so the properties are simply inherited
from $\Aut_{M_1}$.
$\fa$ holds by appeal to IH.

\paragraph{$\mathbf{M \equiv M_1 || M_2}$:}  
In order to match $\sem{\seq{\Gamma}{M_1 || M_2}}$, this construction needs to interleave $\Aut_{M_1}$ and $\Aut_{M_2}$ while gluing the initial and final moves.
Accordingly, we take $k=\max(k_1,k_2)$, $N=N_1+N_2$, 
$\CS{0}= \CS{0}_1 +\CS{0}_2 + \{\circ_1, \circ_2,\bullet_1,\bullet_2\}$,
$\CS{i}= \CS{i}_1 +\CS{i}_2$ 
($0 < i\le k$, assuming $\CS{i}_u=\emptyset$ for $i>k_u$).
All transitions from $\Aut_{M_1}$ and $\Aut_{M_2}$ other than $\add(0)$, $\del(0)$, $\eps(0,2i)$
are simply embedded into the new automaton.
$\add(0)$ and $\del(0)$ need to be synchronised,
as shown below.
\[
\frac{}{\dagger\trans{\mrun}\{\circ_1,\circ_2\}}
\qquad
\frac{\dagger\trans{\mrun}_u \DS{0}_u\quad u\in\{1,2\}}{\{\circ_u\}\trans{\epsilon}\DS{0}_u}
\qquad
\frac{\DS{0}_u\trans{\mdone}_u \dagger\quad u=1,2}{
\DS{0}_u\trans{\epsilon} \{\bullet_u\}}\qquad
\frac{}{\{\bullet_1,\bullet_2\}\trans{\mdone} \dagger}
\]
$N=N_1+N_2$ reflects the need to combine local memories of the two
automata. This need arises only at level $0$, as memory at other
levels will be disjoint. Consequently, we need to adjust memory
indices for $\eps(0,2i)$ transitions from $\Aut_{M_2}$ only:
\[
\frac{(\mop{0}{h}{v}, \cs{2i})\trans{\epsilon}_1 (v',\ds{2i})}{
(\mop{0}{h}{v}, \cs{2i})\trans{\epsilon} (v',\ds{2i})
}\qquad\qquad
\frac{(\mop{0}{h}{v}, \cs{2i})\trans{\epsilon}_2 (v',\ds{2i})}{
(\mop{0}{N_1+h}{v}, \cs{2i})\trans{\epsilon} (v',\ds{2i})
}.
\]
It follows from IH and the construction that $\fa$ will be preserved for $\mdone$.
$\oa$ and $\pq$ are preserved too, because the construction does not
affect the relevant transitions.

\paragraph{$\mathbf{M\equiv M_1;M_2:\comt}$:}

Here we need to let $\Aut_{M_1}$ run to completion
and then direct the computation to $\Aut_{M_2}$.
We take $k=\max(k_1, k_2)$, $N=N_1+N_2$, 
$\CS{0}= \CS{0}_1 + \CS{0}_2 + \{ \circ \}$,
$\CS{i}= \CS{i}_1 + \CS{i}_2$ ($0 < i\le k$).

We modify the $\add(0)$ and $\del(0)$ transitions as follows.
\[
\frac{\dagger\trans{\mrun}_1 \DS{0}}{
\dagger\trans{\mrun}\DS{0}}
\qquad
\frac{
\DS{0}_1 \trans{\mdone}_1 \dagger}{
\DS{0}_1\trans{\epsilon} \{\circ\}}
\qquad 
\frac{\dagger \trans{\mrun}_2 \DS{0}_2}{
\{\circ\} \trans{\epsilon} \DS{0}_2}
\qquad
\frac{\DS{0}_2 \trans{\mdone}_2 \dagger}{
\DS{0}_2 \trans{\mdone} \dagger}
\]
The remaining transitions are simply copies of
other transitions 
from $\Aut_{M_1}$, $\Aut_{M_2}$,
with the proviso that in $\eps(0,2j)$ transitions 
from $\Aut_{M_2}$ we add $N_1$ to the index of the memory cell
that is accessed.

For correctness, we need to appeal to $\fa$ for $M_1$, which
tells us that reaching a configuration in which 
the root is labelled with $\DS{0}_1$ amounts to the termination of $M_1$.
As before, the construction does not modify transitions
relevant to $\oa$, $\pq$, so the properties are simply inherited
from $M_1$ and $M_2$. $\fa$ follows from IH.

\paragraph{$\mathbf{M\equiv f(M_1)}$:}

This case is interesting, because this is where labelled transitions
are created rather than inherited.
According to~\cite{GM08}, the automaton should start with $\mrun\,\mrun^{f}$ and end with $\mdone^f\,\mdone$.
In the meantime, after the first two moves, it should allow for an arbitrary number of $\mrun^{f1}$s, each of which should trigger a separate copy of $\Aut_{M_1}$, which will terminate with $\mdone^{f1}$.
$\mdone^f$ should be read only when all of the copies are finished.

We discuss the simplest instance $f:\comt\rarr\comt$. 
We take $k=2+ k_1$, $N= N_1$,
$\CS{0}= \{0_\mrun,0_\mdone\}$, 
$\CS{1}=\{1_{\mrun}\}$, $\CS{j+2} = \CS{j}_1$ ($0\le j\le k_1$).
First we add transitions corresponding to calling and returning 
from $f$:
\[
\dagger \trans{\mrun^{(\epsilon,0)}} \{0_\mrun\}\qquad
0_{\mrun}\trans{\mrun^{(f,0)}} 1_{\mrun}\qquad
1_{\mrun} \trans{\mdone^{(f,0)}} 0_{\mdone}\qquad
\{0_\mdone\}\trans{\mdone^{(\epsilon,0)}} \dagger.
\]
In state $1_{\mrun}$ we want to allow the environment to spawn an unbounded number of copies of the strategy for  $\seq{\Gamma}{M_1:\comt}$:
\[
\frac{\dagger\trans{\mrun^{(\epsilon,0)}}_1 \DS{0}_1}{ 1_\mrun  \trans{\mrun^{(f 1, 0)}} \DS{0}_1}\qquad
\frac{\DS{0}_1\trans{\mdone^{(\epsilon,0)}}_1 \dagger}{ 
\DS{0}_1\trans{\mdone^{(f 1, 0)}} \dagger}.
\]
Note that the copies will run
two levels lower than in $\Aut_{M_1}$.

The remaining moves
related to $M_1$ originate from $\Gamma$, i.e.
are of the form
$m^{(x_v\vec{i},\rho)}$, where $(x_v : \theta_v)\in\Gamma$.
The associated transitions need to be embedded into 
the new automaton,
but P-question-moves of the form $m^{(x_v,\rho)}$ (corresponding
to initial moves of $\sem{\theta_v}$) 
need to have their pointer adjusted so that they point
at the move tagged with $\mrun^{(\epsilon,0)}$ (leaving $\rho$ unchanged in this case would mean pointing at $\mrun^{(f 1,0)}$).
To achieve this, it suffices to add $2$ to $\rho$ in this case. Otherwise $\rho$ can remain unchanged, because the pointer structure
is preserved. Below we use $\square_L,\square_R$ to refer to arbitrary 
left/right-hand sides of transition rules.
\[
\frac{ \square_L \trans{m^{(x_v,\rho)}}_1 \square_R\qquad 
\textrm{$m\in\Sigma_Q$}}{\square_L \trans{m^{(x_v,\rho+2)}} \square_R}
\qquad
\frac{ \square_L \trans{m^{(x_v\vec{i},\rho)}}_1 \square_R\qquad\textrm{$\vec{i}\neq \epsilon$ or ($\vec{i}=\epsilon$ and $m\in\Sigma_A$)}}{\square_L \trans{m^{(x_v \vec{i},\rho)}}\square_R}
\]
Memory-related transitions are also copied, while adjusting the depth of the level that is being accessed by adding~$2$:
\[
\frac{(\mop{2j}{h}{v},\cs{2i}) \trans{\epsilon}_1 (v',\ds{2i})}{ 
(\mop{2j+2}{h}{v},\cs{2i}) \trans{\epsilon} (v',\ds{2i})}.
\]
The preservation of $\oa$ and $\pq$ follows from the construction and IH,
as the old transitions are simply copied in and 
relabelled injectively. $\fa$ follows from the shape of the new transitions and IH.

\paragraph{$\mathbf{M\equiv\newvar{x}{M_1}}$:} According to~\cite{GM08}, it suffices
to consider plays from $M_1$ in which $\mread^{(x,\rho)}$ and $\mwrite{j}^{(x,\rho)}$ moves are immediately followed by answers, and
the sequences obey the ``good variable'' discipline (a value that is
read corresponds to the most recently written value).
To implement this recipe in an automaton, 
we add an extra cell at level $0$ to store values of $x$
along with explicit initialisation (to facilitate automata re-use in loops).
To this end, we take
$k=k_1$, $N=N_1+1$, 
$\CS{0}=\CS{0}_1+\{\circ,\bullet\}$,
$\CS{i}= \CS{i}_1$ ($0< i\le k$).
All transitions from $\Aut_{M_1}$ can be copied over except 
$\add(0), \del(0)$ and those with superscripts of the
form $(x,\rho)$, i.e. related to $x$.
$\add(0)$ and $\del(0)$ are handled
as specified below.
\[
\frac{m\in I_{\sem{\beta}}}{
\dagger\trans{m} \{\circ\}}
\qquad
\frac{0\le v\le \imax}{
(\mop{0}{N}{v},\circ)\trans{\epsilon} (0,\bullet)}
\qquad
\frac{\dagger\trans{q}_1 \DS{0}_1}{
\{\bullet\}\trans{\epsilon} \DS{0}_1}
\qquad
\frac{\DS{0}_1 \trans{a}_1 \dagger}{\DS{0}_1 \trans{a} \dagger}
\]
Note that in this case $\beta=\comt,\expt$, 
so $I_{\sem{\beta}}=\{\mrun\}$ or $I_{\sem{\beta}}=\{\mq\}$.

For transitions related to $x$ we proceed as follows.
\[
\frac{\cs{2i}\trans{\mwrite{j}^{(x,\rho)}}_1 \ds{2i+1}\trans{\mok^{(x,0)}}_1 \ees{2i}\quad 0\le v\le \imax
}{
(\mop{0}{N}{v}, \cs{2i}) \trans{\epsilon} (j, \ees{2i}) 
}
\qquad
\frac{\cs{2i}\trans{\mread^{(x,\rho)}}_1 \ds{2i+1}\trans{j^{(x,0)}}_1 \ees{2i}
}{
(\mop{0}{N}{j}, \cs{2i}) \trans{\epsilon} (j, \ees{2i})
}
\]
Thanks to $\oa$, the construction will add (at most) 
$\imax+1$ new transitions for each transition 
$\cs{2i}\trans{\mwrite{j}^{(x,\rho)}}_1 \ds{2i+1}$.
Observe that they have the shape
$(\mop{0}{N}{v}, \cs{2i}) \trans{\epsilon} (j, \ees{2i})$ ($0\le v\le \imax$),
and could be represented succinctly by writing
$(\mop{0}{N}{?}, \cs{2i}) \trans{\epsilon} (j, \ees{2i})$,
where $?$ is a wildcard representing an arbitrary value.
So, each $\cs{2i}\trans{\mwrite{j}^{(x,\rho)}}_1 \ds{2i+1}$ gives rise to a single transition with a wildcard.
As the only modifications on $\eps(2j,2i)$ transitions
are of the kind discussed above (adding to
the first two components, but never values), 
this representation
with wildcards can be propagated in further steps.
Similarly, thanks to $\pq$, each transition 
$\ds{2i+1}\trans{j^{(x,0)}}_1 \ees{2i}$ gives rise
to (at most) one new transition 
$(\mop{0}{N}{j}, \cs{2i}) \trans{\epsilon} (j, \ees{2i})$.

Correctness follows from the fact that it suffices to restrict
the work of $M_1$ to traces in which the relevant moves follow each other~\cite{GM08}. Further, by Lemma~\ref{lem:eo}, it suffices to consider scenarios in which the associated transitions follow each other.
$\oa,\pq$ are preserved, because no new relevant transitions are introduced. $\fa$ follows by appealing to IH.

\paragraph{$\mathbf{M\equiv\newsem{s}{M_1}}$:} This case is very similar to the previous one but 
only two values are possible: $0$ (the initial one) or $1$. Transitions corresponding to grabbing change $0$ to $1$, whereas releasing the semaphore does the opposite.
Thanks to $\oa$ and $\pq$, only one transition will be added for each original transition on $\mgrb$ and $\mrls$ respectively.
\[
\frac{\cs{2i}\trans{\mgrb^{(s,\rho)}}_1  \ds{2i+1}\trans{\mok^{(s,0)}}_1 \ees{2i}
}{
(\mop{0}{N}{0}, \cs{2i}) \trans{\epsilon} (1, \ees{2i})
}
\qquad
\frac{\cs{2i}\trans{\mrls^{(s,\rho)}}_1 \ds{2i+1} \trans{\mok^{(s,0)}}_1 \ees{2i}
}{
(\mop{0}{N}{1}, \cs{2i}) \trans{\epsilon} (0, \ees{2i})
}
\]

\cutout{
 \paragraph{$\mathbf{M\equiv\grb{M_1}}:$} Here we want to perform the same transitions as 
the automaton for $M_1$ would when started from $\mgrb$.
At the same time, $\mgrb$ and the corresponding answer $\mok$ have to be relabelled to $\mrun$ and $\mdone$ respectively.
Consequently, it suffices to preserve all transitions from $\Aut_{M_1}$
except $\add(0)$ and $\del(0)$,
which are modified as follows.
\[
\frac{\dagger\trans{\mgrb}_1 \DS{0}}{\dagger \trans{\mrun} \DS{0}}
\qquad
\frac{\DS{0}\trans{\mok}_1 \dagger}{\DS{0} \trans{\mdone} \dagger}
\]
Clearly, $\oa$, $\pq$ and $\fa$ are inherited in this case.
The case $\mathbf{M\equiv\rls{M_1}}$ is the same as the previous one
but $\mrls$ should be used instead of $\mgrb$.}

\subsection*{Complexity analysis} 

The constructions produce an automaton in which 
there are linearly many states, memory cells and transitions,
with respect to term size. For states, it suffices to observe that each construction adds at most a fixed number of new states
to those obtained from IH. The same applies to memory cells.

The case of transitions is harder, as there are several ways
in which transitions are added to the new automaton.
The easiest case is when a transition is simply copied
from an automaton obtained through IH without any changes
to transition labels. Other cases, represented by inference rules,
are based on single premises (old transitions) and generate
new single transitions. As the old ones are not included
in the new automaton, such rules preserve the number of transitions.
$\newvar$ relies on a rule with two premises but,
as discussed, the outcome could still be viewed as 
a single transition with a wildcard.
Finally, when transitions cannot be traced back to old ones,
their number is always bounded by a constant (we regard $\imax$
as a constant too).

Hence, we can conclude that the number of transitions (possibly with
wildcards) will be linear. Because each transition with a wildcard
represents $\imax+1$ transitions without wildcards,
by instantiating them we still obtain a linear number of transitions.
It is also worth noting that each transition involves at most three states:
whenever sets of states are involved in transitions, 
they contain at most two elements.

Finally, we assess the time complexity of the constructions.
A typical case consists of invoking IH and performing 
a bounded number of linear-time operations on the results 
to implement the constructions, such as
retagging to implement the disjoint sum and relabelling.
The combinations of transitions mentioned in $\iaterm{newvar}$
can also be considered in linear time after some preprocessing
that guarantees constant-time access to incoming and outgoing
transition of a given state.
Overall, this could be viewed as a linear number of linear-time operations,
yielding quadratic time complexity. 
Note that the quadratic bound will not extend to the general case, as the conversion to $\beta$-normal $\eta$-long form can induce a significant blowup in the size of the term.
\end{proof}

\section{Conclusion}
We have introduced saturating automata, a new model of computation over infinite alphabets.
Unlike earlier proposals~\cite{DLMW21,DLMW21b},
the automata accept only languages that satisfy a closure property corresponding to saturation, 
a property that naturally emerges in concurrent interactions between programs and their environment.
Consequently, the automata can be claimed to provide a more intrinsic model of such interactions.

We also showed that saturating automata can be used to represent the game semantics of $\fica$,
a paradigmatic language combining higher-order functions, state and concurrency. 
In contrast to previous translations, one does not incur an
exponential penalty for using saturating automata to interpret $\fica$ terms in normal form,
which further confirms their fit with $\fica$. 
Regarding emptiness testing, one can still obtain decidable cases by imposing restrictions analogous to those for leafy~\cite{DLMW21} and split automata~\cite{DLMW21b}.

The opportunity for further exploration of saturating automata remains, with a view to finding verification routines that can
capitalise on saturation.

\begin{ack}
We thank the anonymous reviewers for helpful comments.
\end{ack}

\bibliographystyle{./entics}
\bibliography{my}


\end{document}